\newcommand{\prealg}[2]{
	\begin{tabular}{l p{0.85\textwidth}}
		{\textbf{Input:}} & #1.\\
		{\textbf{Output:}}&  #2.\\
	\end{tabular}
}
\newtheorem{theorem}{Theorem}
\newtheorem{definition}{Definition}
\newtheorem{property}{Property}
\newtheorem{claim}{Claim}
\newtheorem{lemma}{Lemma}
\newcommand{\yesins}{Yes-instance}
\newcommand{\noins}{No-instance}
\newcommand{\np}{{NP}}
\newcommand{\nph}{{\np}-hard}
\newcommand{\bigo}[1]{O(#1)}
\newcommand{\abs}[1]{|#1|}
\newcommand{\edge}[2]{#1#2}
\newcommand{\prob}[1]{{\sc{#1}}}
\newcommand{\fstcpri}[2]{K_{#1}^{#2}} 
\newcommand{\vsetprim}[2]{V(#1, #2)}
\newcommand{\rsto}[2]{#1|_{#2}} 
\newcommand{\parent}[1]{{\sf{parent}}(#1)}
\newcommand{\oover}{{on}} 
\newcommand{\ii}{i}
\newcommand{\vv}{\mathcal{V}}
\newcommand{\dst}[1]{{\sf{dst}}(#1)}
\newcommand{\cdst}[1]{{\sf{dst}}[#1]}
\newcommand{\mneg}[1]{{#1}}
\newcommand{\onlyfull}[1]{}
\newcommand{\pp}{p}
\newcommand{\EPP}[3]
{\begin{center}
{\small
\begin{tabularx}{0.98\columnwidth}{ll}
\toprule
\multicolumn{2}{l}{\textsc{#1}} \\ \midrule
{\bf Given:}   & \parbox[t]{0.8\columnwidth}{#2\vspace*{1mm}}  \\
{\bf Question:}& \parbox[t]{0.8\columnwidth}{#3\vspace*{.5mm}} \\ \bottomrule
\end{tabularx}
}
\end{center}}
\begin{document}
	
		
\title{A Polynomial-Time Algorithm for MCS Partial Search Order on Chordal Graphs\thanks{A preliminary version of the paper appeared in the Proceedings of the 48th International Symposium on Mathematical Foundations of Computer Science (MFCS~2023)~\protect\cite{DBLP:conf/mfcs/ZhenYL23}.}}


\author{Guozhen Rong$^1$, Yongjie Yang$^2$, Wenjun Li$^{1}$\thanks{Corresponding author.}}
\date{\small{$^1$Hunan Provincial Key Laboratory of Intelligent Processing of Big Data on Transportation,\\ Changsha University of Science and Technology, Changsha, China\\
\{rongguozhen, lwjcsust\}@csust.edu.cn\\
$^2$Chair of Economic Theory, Saarland University, Saarbr\"{u}cken, Germany\\ yyongjiecs@gmail.com}}

\maketitle

\begin{abstract}
We study the partial search order problem (\prob{PSOP}) proposed recently by Scheffler [WG 2022]. Given a graph~$G$ together with a partial order {\oover} the set of vertices of~$G$, this problem determines if there is an~$\mathcal{S}$-ordering that is consistent with the given partial order, where~$\mathcal{S}$ is a graph search paradigm like BFS, DFS, etc. This problem naturally generalizes the end-vertex problem which has received much attention over the past few years. 
It also generalizes the so-called ${\mathcal{F}}$-tree recognition problem which has just been studied in the literature recently. 
Our main contribution is a polynomial-time dynamic programming algorithm for the {\prob{PSOP}} of the maximum cardinality search (MCS) restricted to chordal graphs. This resolves one of the most intriguing open questions left in the work of Scheffler [WG 2022]. 
To obtain our result, we propose the notion of layer structure and study numerous related structural properties which might be of independent interest. 
\smallskip

\noindent{\bf{Keywords:}} partial search order, maximum cardinality search, chordal graphs, clique graphs, dynamic programming
\end{abstract}
		
		

\section{Introduction}

Graph search paradigms are pervasive in algorithms for innumerable graph problems. 
In addition to the most popular paradigms breadth-first search (BFS) and depth-first search (DFS), several other prevalent graph search paradigms---including, for instance, lexicographic breadth-first search (LBFS), lexicographic depth-first search (LDFS), maximum cardinality search (MCS), maximal neighborhood search (MNS)---have also been extensively studied in the literature~\cite{DBLP:journals/siamdm/CorneilK08,DBLP:journals/siamcomp/RoseTL76,DBLP:journals/siamcomp/Tarjan72,DBLP:journals/siamcomp/TarjanY84}. 
These graph search paradigms have proved to be exclusively useful in dealing with a variety of graph problems~\cite{DBLP:journals/siamdm/BretscherCHP08,DBLP:journals/siamcomp/CorneilDH13,DBLP:journals/siamdm/CorneilDHK16,DBLP:journals/dam/KumarM98}. For instance, MCS has been successfully used in the recognition of special graphs~\cite{DBLP:journals/siamcomp/TarjanY84}, the computation of minimal separators~\cite{DBLP:journals/dam/KumarM98}, the computation of minimal triangulation of graphs~\cite{DBLP:journals/algorithmica/BerryBHP04}, determining lower bounds of treewidth~\cite{DBLP:journals/dam/BodlaenderK07,DBLP:journals/endm/KosterBH01}, etc.
In several of these algorithmic applications, last visited vertices in graphs are crucial for the correctness of the algorithms. Last visited vertices  also exhibit some nice structural properties. For instance, for a cocomparability graph that is Hamiltonian, if a vertex is last visited by LDFS, then there is a Hamiltonian path starting from this vertex~\cite{DBLP:journals/dmtcs/BeisegelDKKPSS19,DBLP:journals/siamcomp/CorneilDH13}. For more concrete examples on this issue, we refer to~\cite{DBLP:journals/dmtcs/BeisegelDKKPSS19,DBLP:journals/dam/CorneilKL10}. These specialities of last visited vertices inspired Corneil, K\"{o}hler, and Lanlignel~\cite{DBLP:journals/dam/CorneilKL10} to put forward the end-vertex problem, in which we are given a graph and a particular vertex~$v$, and are asked whether~$v$ can be the last visited one according to a certain graph search paradigm. Thenceforth, investigation on the end-vertex problem has flourished, resulting in the complexity of the problem for both general graphs and many special graphs such as chordal graphs, split graphs, interval graphs, bipartite graphs, etc., being substantially  established~\cite{DBLP:journals/dmtcs/BeisegelDKKPSS19,RONG22,DBLP:journals/ipl/ZouWWC22}. For a summary of the recent progress, we refer to~\cite{Gorznyphdthesis2022endvertex}. 

A closely related problem is the search tree recognition problem which has a relatively longer history~\cite{HagerupNowak1985b,HagerupNowak1985,DBLP:conf/wg/KorachO88}. 
This problem determines if a given spanning tree of a graph can be obtained via a traversal of the graph by a certain search paradigm. This problem comes natural for some search paradigms like BFS and DFS, since they not only output an ordering but also generate a spanning tree during the search. However, it is ill-defined for some other search paradigms like MCS and MNS. Aiming at overcoming the plight, Beisegel~et~al.~\cite{BeisegelDKKPSS21} introduced the notions of $\mathcal{F}$-tree and  $\mathcal{L}$-tree ($\mathcal{F}$ and $\mathcal{L}$ respectively stand for ``first'' and ``last''). Particularly, given an ordering~$\sigma$ of the vertices of a graph, the $\mathcal{F}$-tree (respectively, $\mathcal{L}$-tree) is a spanning tree of the graph so that every vertex~$v$ other than the first one in~$\sigma$ is adjacent to its first (respectively, last) neighbor appearing before~$v$ in~$\sigma$. BFS-trees and DFS-trees are $\mathcal{F}$-trees and $\mathcal{L}$-trees of BFS and DFS, respectively. 
Having these notions, Beisegel~et~al.~\cite{BeisegelDKKPSS21} studied the complexity of the $\mathcal{F}$-tree recognition problem and the $\mathcal{L}$-tree recognition problem with respect to the above-mentioned search paradigms for both general graphs and many special graph classes. Very recently, Scheffler~\cite{DBLP:journals/tcs/Scheffler22} complemented these results by showing that the {$\mathcal{L}$}-tree recognition problem of BFS restricted to bipartite graphs, and the {$\mathcal{F}$}-tree recognition problem of DFS restricted to chordal graphs and chordal bipartite graphs are {\nph}, standing in a strong contrast to the polynomial-time solvability of the {$\mathcal{F}$}-tree recognition problem of BFS and the {$\mathcal{L}$}-tree recognition problem of DFS in general~\cite{HagerupNowak1985,DBLP:conf/wg/KorachO88}. 

From the known results, Scheffler~\cite{DBLP:conf/wg/Scheffler22} discerned that  the complexity of the end-vertex problem and the complexity of the $\mathcal{F}$-tree recognition problem seemed to be somewhat connected. For instance, for LBFS, MCS, and MNS, both problems are {\nph} on weakly chordal graphs and are linear-time solvable on split graphs. Additionally, for MNS and MCS, both problems are polynomial-time solvable when restricted to chordal graphs~\cite{BeisegelDKKPSS21,RONG22}.  
Towards a comprehensive understanding of the connection, Scheffler~\cite{DBLP:conf/wg/Scheffler22} introduced the partial search order problem ({\prob{PSOP}}) which generalizes both the end-vertex problem and the {$\mathcal{F}$-tree recognition problem}. Given a graph~$G$ and a partial order~$R$ {\oover} the vertices of~$G$, the {\prob{PSOP}} of a search paradigm~$\mathcal{S}$ determines if~$G$ admits an $\mathcal{S}$-ordering which linearly extends~$R$.  Scheffler~\cite{DBLP:conf/wg/Scheffler22} derived polynomial-time algorithms for the {\prob{PSOP}} of LBFS restricted to chordal bipartite graphs, and polynomial-time algorithms for the {\prob{PSOP}} of MCS restricted to split graphs. However, whether the {\prob{PSOP}} of MCS restricted to chordal graphs, arguably the most intriguing case, is polynomial-time solvable is unknown prior to our current work. 
We resolve this open question in the affirmative.  To obtain our result, we propose the notion of layer structure and study a number of structural properties which might be of independent interest. At a high level, based on the properties studied, we iteratively decompose the clique graph of a given chordal graph into what we call layer structures, handle the components (which we call units) of each layer structure separately, and utilize dynamic programming technique to merge local solutions into a whole one.

\section{Preliminaries}
In this section, we elaborate on important terminologies and  notions used to obtain our results. By convention,~$[i]$ denotes the set of positive integers no greater than~$i$. 

\subsection{Graphs}
We follow standard notions in graph theory. For notions used but not defined in the paper, we refer to~\cite{Douglas2000}.
The graphs we consider are finite, undirected, and simple. Let~$G$ be a graph. The {\emph{vertex set}} and {\emph{edge set}} of~$G$ are denoted by~$V(G)$ and~$E(G)$, respectively. 
For a vertex $v \in V(G)$, its \emph{neighborhood} in~$G$, denoted~$N_G(v)$, is the set of vertices adjacent to~$v$ in~$G$.  A \emph{clique} of~$G$ is a subset of pairwise adjacent vertices in~$G$. We call a clique of~$G$ containing a vertex $v\in V(G)$ a {\emph{$v$-clique}}. Analogously, a clique of~$G$ containing a subset $X\subseteq V(G)$ is called an {\emph{$X$-clique}}. The subgraph of~$G$ induced by $X\subseteq V(G)$ is denoted by~$G[X]$. 

A path~$P$ of length~$t$ is a graph with a sequence of $t+1$ distinct vertices~$v_1$,~$v_2$, $\dots$, $v_{t+1}$ and with the edge set $\{\edge{v_i}{v_{i+1}} : i\in [t]\}$. We say that~$P$ is a path between~$v_1$ and~$v_t$, or simply call it a $v_1$-$v_t$ path. 
Two vertices in~$G$ are {\emph{connected}} if there is a path between them. For $u, v\in V(G)$, a {\emph{$u$-$v$ separator}} is a subset $X\subseteq V(G)$ so that~$u$ and~$v$ are disconnected after deleting all vertices in~$X$ from~$G$. A $u$-$v$ separator~$X$ is {\emph{minimal}} if there are no other $u$-$v$ separators~$X'$ such that $X'\subsetneq X$.  
 
The {\emph{length}} of a cycle is the number of edges it contains. A \emph{hole} is an induced cycle of length greater than three.  A graph is \emph{chordal} if it does not contain any holes as induced subgraphs.

\subsection{The Partial Search Order Problem}

A {\emph{partial order}} {\oover} a set~$X$ is a reflexive, antisymmetric, and transitive binary relation {\oover}~$X$. For ease of exposition, for a partial order~$R$, we sometimes use $x <_R y$ to denote $(x, y)\in R$. 
A {\emph{linear order}} is a partial order that is complete. We usually write a linear order~$R$ in the format of $(x_1, x_2, \dots, x_m)$ which means that $(x_i, x_j)\in R$ for all $i, j\in [m]$ such that $i<j$. 
 A linear order~$R$ {\emph{extends}} a partial order~$R'$ if for every $(x, y)\in R'$ it holds that $(x, y)\in R$. We also call~$R$ a {\emph{(linear) extension}} of~$R'$. 
For a binary relation~$R$ {\oover} a set~$X$, and for~$X'\subseteq X$, we use~$\rsto{R}{X'}$ to denote~$R$ restricted to~$X'$. 
 
For a graph search paradigm~$\mathcal{S}$ and a graph~$G$, an {\emph{$\mathcal{S}$-ordering}} of~$G$ is an ordering of~$V(G)$ that can be generated from an~$\mathcal{S}$ search on~$G$. 

The partial search order problem ({\prob{PSOP}}) of a graph search paradigm~$\mathcal{S}$ is defined as follows. 

\EPP 
{PSOP-$\mathcal{S}$}
{A connected graph~$G$ and a partial order~$R$ {\oover}~$V(G)$.}
{Is there an $\mathcal{S}$-ordering of~$G$ that extends~$R$?}

In the paper, we study the case where~$\mathcal{S}$ is MCS and the input graph is a connected chordal graph.

\subsection{Clique Graphs}

It has long been known that chordal graphs admit a characterization in terms of their clique trees. 
Precisely, a connected graph~$G$ is chordal if and only if there exists a tree~$T$ whose vertices one-to-one correspond to maximal cliques of~$G$ so that for every vertex $v \in V(G)$ the vertices of~$T$ corresponding to all maximal $v$-cliques of~$G$ induce a subtree of~$T$~\cite{blair1993,DBLP:journals/dm/Buneman74,Gavril1974,DBLP:journals/siamcomp/TarjanY84}. 
Such a tree~$T$ is referred to as a \emph{clique tree} of~$G$~\cite{DBLP:journals/dm/Buneman74,Gavril1974,WALTERphd1972}.  

\begin{figure}[htbp]
	\centering
\includegraphics[width=0.85\textwidth]{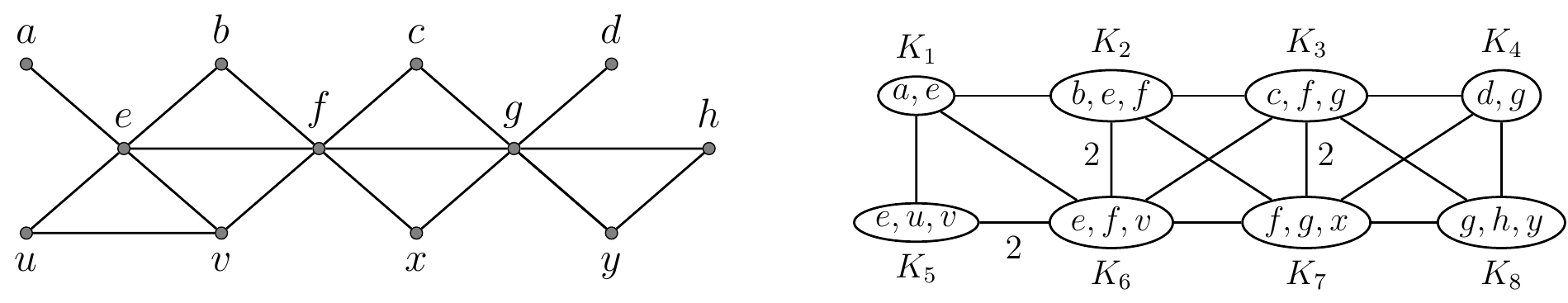}
	\caption{A connected chordal graph (left) and its clique graph (right). In the clique graph, all omitted edge weights are~$1$.}
	\label{fig-clique-graph-example}
\end{figure}

Another relevant notion is \emph{clique graph}, first introduced by Galinier, Habib, and Paul ~\cite{GalinierHP95}.\footnote{This notion is also studied under the name reduced clique graph in the literature (see, e.g.,~\cite{DBLP:journals/ejc/HabibS12}).} Precisely, the clique graph of a connected chordal graph~$G$, denoted~$C(G)$, is the graph whose vertex set is exactly the set of all maximal cliques of~$G$, and two vertices~$K$ and~$K'$ in the clique graph are adjacent if and only if $K \cap K'$ is a minimal $u$-$v$ separator of~$G$ for all $u \in K\setminus K'$ and all $v \in K'\setminus K$. Each edge~$\edge{K}{K'}$  in the clique graph~$C(G)$ is associated with the {\emph{label}} $K \cap K'$ and with the {\emph{weight}} $|K \cap K'|$.

For clarity, hereinafter we call vertices in a clique tree or a clique graph {\emph{nodes}}.  
It is a folklore that every chordal graph~$G$ has at most~$|V(G)|$ maximal cliques~\cite{Dirac1961}, and hence every clique tree/graph of~$G$ contains at most~$|V(G)|$ nodes.  
Clique trees and clique graphs  of chordal graphs are closely linked, as stated in the following lemma.

\begin{lemma}[\cite{GalinierHP95}]
\label{lem-clique-graph}
	Let~$G$ be a connected chordal graph~$G$, and let~$K$ and~$K'$ be two maximal cliques in~$G$. Then,~$K$ and~$K'$ are adjacent in~$C(G)$ if and only if they are adjacent in some clique tree of~$G$. 
\end{lemma}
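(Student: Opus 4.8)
The plan is to prove the two implications separately. \emph{First I prove that every edge of a clique tree is an edge of $C(G)$.} Suppose $KK'$ is an edge of a clique tree $T$ of $G$; I want to check that $K \cap K'$ is a minimal $u$-$v$ separator for all $u \in K \setminus K'$ and all $v \in K' \setminus K$. Deleting $KK'$ from $T$ leaves two subtrees, one containing $K$ and one containing $K'$; let $U$ and $W$ be the unions of the maximal cliques that are nodes of these two subtrees. Using the defining property of a clique tree (for each vertex, the nodes containing it induce a subtree of $T$), any vertex lying both in a node on the $K$-side and in a node on the $K'$-side must lie in $K$ and in $K'$, since the $T$-path joining those two nodes passes through $K$ and $K'$; hence $U \cap W = K \cap K'$. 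Furthermore, every edge of $G$ lies inside some maximal clique, which is a node of $T$ on exactly one of the two sides, so there is no edge of $G$ between $U \setminus (K \cap K')$ and $W \setminus (K \cap K')$. Since $K \setminus K' \subseteq U \setminus (K \cap K')$ and $K' \setminus K \subseteq W \setminus (K \cap K')$, the set $K \cap K'$ separates every such $u$ from every such $v$; and it does so minimally, because for any $w \in K \cap K'$ the three-vertex path $u, w, v$ exists (as $u,w \in K$ and $w,v \in K'$) and avoids $(K \cap K') \setminus \{w\}$. Therefore $KK'$ is an edge of $C(G)$.

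For the converse, suppose $KK'$ is an edge of $C(G)$ and set $S := K \cap K'$. Pick any clique tree $T_0$ of $G$ (one exists because $G$ is chordal); if $KK' \in E(T_0)$ we are done, so assume otherwise and let $K = L_0, L_1, \dots, L_m = K'$ be the $K$-$K'$ path in $T_0$. Closing this path with the edge $KK'$ gives a cycle $\gamma$; because every edge of a clique tree is an edge of $C(G)$ (the direction just proved) and $KK' \in E(C(G))$ by hypothesis, $\gamma$ is a cycle of $C(G)$. By the subtree property, $S \subseteq L_i$ for every node $L_i$ on the path, so each path-edge $L_iL_{i+1}$ has label $L_i \cap L_{i+1} \supseteq S$ and thus weight at least $|S|$, whereas $KK'$ has weight exactly $|S|$. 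I will argue below that $\gamma$ possesses a second edge of weight $|S|$, which must then be some $L_iL_{i+1}$ with $L_i \cap L_{i+1} = S$. Granting that, I claim $T_1 := T_0 - L_iL_{i+1} + KK'$ is a clique tree of $G$ containing the edge $KK'$: it is a spanning tree because deleting $L_iL_{i+1}$ separates $K$ from $K'$ in $T_0$ while $KK'$ reconnects them, and the subtree property is preserved because, for $v \in S$, the subtree of nodes containing $v$ loses the edge $L_iL_{i+1}$ but is reconnected through $KK'$ (as $v \in K \cap K'$), while for $v \notin S$ this subtree contains neither $L_iL_{i+1}$ (whose label is $S$) nor $KK'$ and so is unchanged.

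The remaining, and main, step is the following claim: in every cycle of $C(G)$ the minimum edge-weight is attained at least twice; applied to $\gamma$ (minimum weight $|S|$, attained by $KK'$) this supplies the needed path-edge. To prove the claim, let $K_1, \dots, K_p, K_1$ be a cycle of $C(G)$, so $p \ge 3$ since $C(G)$ is simple, pick a minimum-weight edge $K_aK_b$, and put $S^* := K_a \cap K_b$. Because $S^* \subsetneq K_a$ and $K_a$ is a clique, $S^*$ contains no maximal clique, so every $K_i$ has a nonempty part $K_i \setminus S^*$, which is a clique avoiding $S^*$ and hence lies in a single component of $G - S^*$; colour $K_i$ by that component. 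Since $K_aK_b \in E(C(G))$, the set $S^*$ separates $K_a \setminus K_b$ from $K_b \setminus K_a$, so the neighbours $K_a$ and $K_b$ of the chosen edge receive two distinct colours; travelling along the path obtained by deleting $K_aK_b$ from the cycle, the colour therefore changes across some edge $K_iK_{i+1} \ne K_aK_b$. For that edge, any vertex in $(K_i \cap K_{i+1}) \setminus S^*$ would lie in both of two distinct components of $G - S^*$, which is absurd, so $K_i \cap K_{i+1} \subseteq S^*$ and hence $|K_i \cap K_{i+1}| \le |S^*|$ --- equality, since $|S^*|$ is the minimum weight on the cycle. Thus the minimum is attained by $K_aK_b$ and by $K_iK_{i+1}$, proving the claim. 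I expect this last step to be the crux: the right move is to track the components of $G - S^*$ rather than to manipulate clique trees directly, and care is needed to ensure that the second minimum-weight edge produced is distinct from the one we started with.
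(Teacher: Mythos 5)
The paper does not prove this lemma: it is imported verbatim from Galinier, Habib, and Paul \cite{GalinierHP95}, so there is no in-paper argument to compare against. Your proof is correct and self-contained, and it follows the standard route from that literature: the forward direction via the two subtrees obtained by deleting a clique-tree edge (giving $U\cap W=K\cap K'$, no edges across, and minimality from the paths $u,w,v$), and the backward direction via an edge exchange on the fundamental cycle of $KK'$, powered by the key claim that the minimum edge weight on any cycle of $C(G)$ is attained at least twice. The details check out: the colour-change argument (colouring each cycle node by the component of $G-S^{*}$ containing $K_i\setminus S^{*}$, which is nonempty by maximality) does produce a second minimum-weight edge distinct from $K_aK_b$ because the traversed path excludes that edge, and the verification that $T_0-L_iL_{i+1}+KK'$ remains a clique tree is handled correctly for both $v\in S$ and $v\notin S$.
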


In effect, Lemma~\ref{lem-clique-graph} asserts that the clique graph of a connected chordal graph is the union of all clique trees of the same graph. 
As an important consequence, it holds that for every $v\in V(G)$, the subgraph of~$C(G)$ induced by all maximal $v$-cliques are connected~\cite{DBLP:journals/ejc/HabibS12}.

For a label~$S$ of some edge in~$C(G)$, we use~$C(G) \ominus S$ to denote the graph obtained from~$C(G)$ by deleting all edges with the label~$S$. For example, for the chordal graph $G$ in Figure~\ref{fig-clique-graph-example}, $C(G)\ominus \{f\}$ is the graph shown in Figure~\ref{fig-clique-graph-example-b}.

\begin{figure}[htbp]
	\centering
	\includegraphics[width=0.4\textwidth]{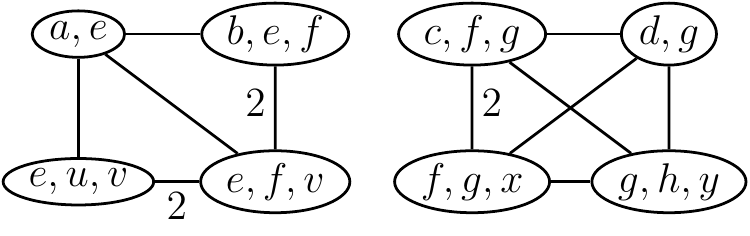}
	\caption{$C(G)\ominus \{f\}$ where $G$ is the connected chordal graph in Figure~\ref{fig-clique-graph-example}.}
	\label{fig-clique-graph-example-b}
\end{figure}

\begin{lemma}[\cite{RONG22}]
\label{lem-clique-graph-delete-edge}
	Let~$G$ be a connected chordal graph, and let~$S$ be the label of an edge in~$C(G)$ with the minimum weight. Then, 
	\begin{enumerate}
		\item[(i)] for every  $v \in V(G) \setminus S$, all maximal $v$-cliques of~$G$ are in the same connected component of $C(G) \ominus S$; and
		\item[(ii)] for every $u,v \in V(G) \setminus S$,~$S$ is a~$u$-$v$ separator in~$G$ if and only if any maximal $u$-clique  and any maximal $v$-clique of~$G$ are in different connected components of~$C(G) \ominus S$.
	\end{enumerate}
\end{lemma}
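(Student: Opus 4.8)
The plan is to handle the two items essentially separately, with~(i) a short observation and~(ii) carrying the real content. For~(i), I would use the fact recalled above that for every vertex~$v$ the maximal $v$-cliques induce a connected subgraph of~$C(G)$. Every edge of that subgraph joins two maximal $v$-cliques~$K$ and~$K'$, so its label $K\cap K'$ contains~$v$; as $v\notin S$, this label is different from~$S$, and hence the edge is not deleted when we pass to $C(G)\ominus S$. Thus the subgraph of~$C(G)$ induced by all maximal $v$-cliques survives intact in $C(G)\ominus S$ and remains connected, which is exactly~(i). Note that this argument does not use the minimality of the weight of~$S$.

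For~(ii), item~(i) lets me define, for each $w\in V(G)\setminus S$, the connected component $D(w)$ of $C(G)\ominus S$ that contains all maximal $w$-cliques, and then prove the equivalent statement that~$S$ is a $u$-$v$ separator if and only if $D(u)\neq D(v)$, by establishing the two implications separately. The direction ``$D(u)\neq D(v)$ implies~$S$ is a $u$-$v$ separator'' I would prove by contraposition: if $u=w_0,w_1,\dots,w_k=v$ is a path in $G-S$, then each edge $w_iw_{i+1}$ lies in some maximal clique~$K$, which is simultaneously a maximal $w_i$-clique and a maximal $w_{i+1}$-clique, so $D(w_i)=D(w_{i+1})$; chaining these equalities gives $D(u)=D(v)$. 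This direction again does not use the weight hypothesis.

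The converse, ``$D(u)=D(v)$ implies~$S$ is not a $u$-$v$ separator'', is where the minimum-weight assumption enters, and I expect this to be the one genuinely delicate point. Assuming $D(u)=D(v)$, fix a maximal $u$-clique~$K_0$, a maximal $v$-clique~$K_t$, and a path $K_0,K_1,\dots,K_t$ between them in $C(G)\ominus S$. For each~$i$, the label $L_i=K_i\cap K_{i+1}$ is an edge label of~$C(G)$ with $L_i\neq S$; since~$S$ has minimum weight, $|L_i|\geq|S|$, so $L_i\subseteq S$ would force $L_i=S$, a contradiction. Hence $L_i\setminus S\neq\emptyset$, and I may choose $x_i\in(K_i\cap K_{i+1})\setminus S$. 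Then $u$ and~$x_0$ both lie in~$K_0$, $x_{i-1}$ and~$x_i$ both lie in~$K_i$ for each $1\leq i\leq t-1$, and $x_{t-1}$ and~$v$ both lie in~$K_t$, so consecutive terms of $u,x_0,x_1,\dots,x_{t-1},v$ are adjacent or equal in~$G$, while none of these vertices lies in~$S$. This gives a $u$-$v$ walk in $G-S$, so~$u$ and~$v$ lie in the same connected component of $G-S$ and~$S$ is not a $u$-$v$ separator. Everything here is routine except the single observation that an edge label different from~$S$ cannot be contained in~$S$, which is exactly what the minimum-weight hypothesis buys.
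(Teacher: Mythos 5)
The paper does not prove this lemma; it is imported as a black box from Rong, Cao, Wang, and Wang (reference [RONG22]), so there is no in-paper argument to compare yours against. Your proof is correct and self-contained. Part~(i) follows, as you say, from the connectivity of the subgraph of $C(G)$ induced by the maximal $v$-cliques together with the observation that every edge of that subgraph has a label containing $v\notin S$ and hence survives in $C(G)\ominus S$; part~(ii)'s forward direction (via the contrapositive, lifting a $u$-$v$ path in $G-S$ to a chain of shared maximal cliques) is routine; and the backward direction correctly isolates the one place the minimum-weight hypothesis is needed, namely that an edge label $L_i\neq S$ with $|L_i|\geq|S|$ cannot be contained in $S$, which yields the vertices $x_i\in (K_i\cap K_{i+1})\setminus S$ used to build a $u$-$v$ walk avoiding $S$. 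The only cosmetic caveat is the degenerate case $t=0$ (a common maximal clique containing both $u$ and $v$), which your walk construction handles trivially but is worth a word.
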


\subsection{Graph Search Paradigms}
Now we give definitions of three graph search paradigms, namely,  MCS, the Prim search, and the generic search. Our focus is {\prob{PSOP-MCS}}, but our algorithm resorts to Prim search of clique graphs and generic search of layer structures (defined in Section~\ref{sec-ls}) of clique graphs. 

Let us start with MCS. Simply put, beginning with an arbitrary vertex,  MCS picks a vertex having the maximum number of visited neighbors as the next to visit. See Algorithm~\ref{alg-MCS} for a pseudocode of MCS. 

\begin{algorithm}[htbp]
	\caption{MCS}
	\label{alg-MCS}
	\prealg
	{A connected graph~$G$ of~$n$ vertices}
	{An ordering $\sigma$ of~$V(G)$}
	\begin{algorithmic}[1]
		\For{each $v \in V(G)$} 
		\State ${\sf{count}}(v) \leftarrow 0$;
		\EndFor
		\For{$i = 1, 2, \dots, n$} 
		\State let~$v$ be any unvisited vertex so that ${\sf{count}}(v)\geq {\sf{count}}(v')$ for all unvisited vertices~$v'$;
		\State $\sigma(v) \leftarrow i$; {\hfill /* visit~$v$ */}
		\For{each unvisited vertex $u \in N_G(v)$} 
		\State ${\sf{count}}(u) \leftarrow {\sf{count}}(u)+1$;
		\EndFor
		\EndFor
		\State \Return $\sigma$;
	\end{algorithmic}
\end{algorithm}

Prim's algorithm is one of the most famous algorithms for finding minimum spanning trees~\cite{Dijkstra1959,Jarnik1930,Prim1957}. In a nutshell, starting from a tree consisting of an arbitrary edge with the minimum weight, the algorithm grows the tree by adding edges, one-by-one, of minimum possible weights without destroying the tree structure, until the tree becomes a spanning tree. By turning ``minimum'' to ``maximum'' in the algorithm, it instead  returns a maximum spanning tree.     
Algorithm~\ref{alg-Prim-Search} delineates a graph search paradigm modified from Prim's algorithm~\cite{RONG22}. Generally speaking, starting from any arbitrary vertex, it picks as the next one to be visited a so far unvisited vertex incident to an edge with the maximum weight among all edges between visited vertices and unvisited vertices. 
Following~\cite{RONG22}, we call an ordering obtained from applying Algorithm~\ref{alg-Prim-Search} to a graph~$G$ a \emph{Prim ordering} of~$G$.  
Prim orderings of the clique graph of a chordal graph have an appealing property in respect of their clique graphs, as stated in the following lemma. 

\begin{algorithm}[htbp]
	\caption{Prim Search}
	\label{alg-Prim-Search}
	\prealg
	{A connected graph~$G$ of~$n$ vertices where every edge has a weight}
	{An ordering $\pi$ of~$V(G)$}
	\begin{algorithmic}[1]
		\State $\pi(v) \leftarrow 1$, where~$v$ is an arbitrary vertex of~$G$;  {\hfill /* the first visited vertex */}
		\State $S\leftarrow \{v\}$;
		\For{$i=2, 3, \dots, n$} 
		\State let~$v$ be a vertex in $V(G)\setminus S$  incident to an edge with the maximum weight among all edges between~$S$ and $V(G)\setminus S$;
		\State $\pi(v) \leftarrow i$;			 {\hfill /* visit~$v$ */}
		\State $S \leftarrow S\cup \{v\}$;
		\EndFor
		\State \Return $\pi$;
	\end{algorithmic}
\end{algorithm}

\begin{lemma}[\cite{RONG22}]
\label{lem-prefix-prim-ordering}
	Let~$G$ be a connected chordal graph, and let~$(K_1, K_2, \dots, K_t)$ be a Prim ordering of~$C(G)$. 
	For every $i\in [t]$, the subgraph of~$C(G)$ induced by $\{K_1, K_2, \ldots, K_i\}$ is the clique graph of the subgraph of~$G$ induced by $\bigcup_{j\in [i]} K_j$.
\end{lemma}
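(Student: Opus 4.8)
The plan is to identify, for each fixed $i\in[t]$, one clique tree of $G$ whose restriction to the prefix $\{K_1,\dots,K_i\}$ is connected, and then to read off both the maximal cliques and the clique-graph structure of $G_i:=G[\bigcup_{j\in[i]}K_j]$ from it. Write $U_i=\bigcup_{j\in[i]}K_j$, and let $T$ be the spanning tree of $C(G)$ built during the run of Prim search that produces $(K_1,\dots,K_t)$, i.e.\ the tree whose edges are the frontier edges used to insert $K_2,\dots,K_t$. Since Prim's rule always picks a maximum-weight edge leaving the set of already visited nodes, $T$ is a maximum-weight spanning tree of $C(G)$, and hence, by the well-known characterization of the clique trees of a chordal graph as precisely the maximum-weight spanning trees of its clique graph (compatible with Lemma~\ref{lem-clique-graph}), $T$ is a clique tree of $G$. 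As Prim grows a connected subtree, $T_i:=T[\{K_1,\dots,K_i\}]$ is a subtree of $T$ containing $K_1$; for $\ell>i$ let $g(\ell)\in\{K_1,\dots,K_i\}$ be the \emph{gateway} of $K_\ell$, the unique node of $T_i$ through which every $T$-path from $K_\ell$ into $T_i$ passes.

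The structural heart is a \textbf{funnelling claim}: if $D$ is a connected subgraph of $G-U_i$, then all vertices of $U_i$ with a neighbour in $D$ lie in one common maximal clique $K_m$ with $m\le i$; in particular, applied with $D$ a single vertex of the nonempty set $K_\ell\setminus U_i$, this gives $K_\ell\cap U_i\subseteq g(\ell)$ for every $\ell>i$. I would prove this from the subtree property of the clique tree $T$: the maximal cliques meeting $D$ induce a connected subtree of $T$, and each of them contains a vertex outside $U_i$, hence is some $K_\ell$ with $\ell>i$ (because $K_j\subseteq U_i$ for $j\le i$); so this subtree lies entirely in $T\setminus T_i$ and attaches to $T_i$ through a single gateway $K_m$. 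Now if $v\in U_i$ has a neighbour in $D$, then $v$ lies in some clique meeting $D$ and also in some $K_{j'}$ with $j'\le i$; since the $v$-cliques form a subtree of $T$, the $T$-path between these two cliques, which runs through $K_m$, consists of $v$-cliques, whence $v\in K_m$.

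Granting the funnelling claim, the first conclusion is that the maximal cliques of $G_i$ are exactly $K_1,\dots,K_i$: each $K_j$ with $j\le i$ is a clique of $G_i$ and stays maximal there because $G_i$ is an induced subgraph of $G$ and $K_j$ is maximal in $G$; conversely, any maximal clique $C$ of $G_i$ extends to a maximal clique $K_\ell$ of $G$, so $C=K_\ell\cap U_i$, and if $\ell>i$ then $C\subseteq g(\ell)$ by funnelling, forcing $C=g(\ell)\in\{K_1,\dots,K_i\}$ by maximality. The same subtree argument shows that $T_i$ is in fact a clique tree of $G_i$, so $E(T_i)\subseteq E(C(G_i))$, and since $T_i\subseteq T\subseteq C(G)$ also $E(T_i)\subseteq E(C(G)[\{K_1,\dots,K_i\}])$.

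It remains to match edges and labels of $C(G)$ and $C(G_i)$ on $\{K_1,\dots,K_i\}$. For maximal cliques $K,K'$ of $G_i$, the pair $\edge{K}{K'}$ is an edge with label $K\cap K'$ of $C(G_i)$ (respectively of $C(G)$) exactly when $K\cap K'$ is a minimal $u$-$v$ separator in $G_i$ (respectively in $G$) for all $u\in K\setminus K'$ and $v\in K'\setminus K$; since $K,K'\subseteq U_i$, it thus suffices to show that for every $S\subseteq U_i$ and all $u,v\in U_i$, $S$ separates $u$ from $v$ in $G$ iff it does so in $G_i$, minimality then transferring verbatim because the relevant proper subsets of $K\cap K'$ also lie in $U_i$. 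One direction is trivial as $G_i$ is a subgraph of $G$; for the other, I would take a $u$-$v$ path in $G-S$ and reroute it inside $G_i$: each maximal stretch of the path lying outside $U_i$ forms a connected subgraph $D$ of $G-U_i$, its two endpoints on the path lie in $U_i\setminus S$ and, by the funnelling claim, in a common maximal clique of $G_i$, hence are adjacent there; replacing each such stretch by the corresponding $G_i$-edge yields a $u$-$v$ walk in $G_i$ avoiding $S$, a contradiction. The step I expect to be the main obstacle is the first one, namely making precise that a prefix of a Prim ordering of $C(G)$ really corresponds to a subtree of a genuine clique tree of $G$ with well-defined gateways (equivalently, that Prim search on $C(G)$ returns a clique tree); once this structural picture is in place, the identification of the maximal cliques of $G_i$ and the separator-preservation argument are routine.
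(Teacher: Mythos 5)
The paper offers no proof of Lemma~\ref{lem-prefix-prim-ordering}: it is imported from~\cite{RONG22}, so there is nothing in this document to compare your argument against line by line. On its own terms, your proof is correct and essentially complete. The route --- Prim search on $C(G)$ builds a maximum-weight spanning tree, which by the well-known maximum-weight spanning tree characterization of clique trees (an ingredient beyond the paper's Lemma~\ref{lem-clique-graph}, but standard) is a clique tree $T$; the prefix induces a subtree $T_i$ with well-defined gateways; the funnelling claim identifies the maximal cliques of $G[U_i]$ as exactly $K_1,\dots,K_i$; and the separator-transfer argument matches edges, labels and weights of $C(G)[\{K_1,\dots,K_i\}]$ with those of the clique graph of $G[U_i]$ --- all checks out. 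Two small repairs. First, you apply the funnelling claim to ``a single vertex of the nonempty set $K_\ell\setminus U_i$'' without proving nonemptiness, and as written this is mildly circular, since nonemptiness is itself a consequence of the inclusion $K_\ell\cap U_i\subseteq g(\ell)$ that you are deriving from it. The fix is to prove that inclusion directly: for $v\in K_\ell\cap U_i$ the subtree of $T$ formed by the maximal $v$-cliques contains $K_\ell$ and some $K_j$ with $j\le i$, hence contains the gateway $g(\ell)$ lying on the $T$-path between them, so $v\in g(\ell)$; then $K_\ell\subseteq U_i$ would give $K_\ell\subseteq g(\ell)$, contradicting the incomparability of distinct maximal cliques, which supplies the nonemptiness needed for the general form of the funnelling claim. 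Second, you should note explicitly that $G[U_i]$ is connected (adjacent nodes of the subtree $T_i$ have nonempty intersection, as labels of clique-tree edges of a connected chordal graph are nonempty minimal separators), so that its clique graph is defined. Neither point is a substantive obstacle.
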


Let~$\sigma$ be an ordering of~$V(G)$, and let~$\pi$ be an ordering of the maximal cliques of~$G$. For a vertex $v\in V(G)$, we use~$\fstcpri{\pi}{v}$ to denote the first $v$-clique in~$\pi$. 
We say that~$\sigma$ is a \emph{generation} of~$\pi$ (or~$\pi$ {\emph{generates}}~$\sigma$) if for all $x, y\in V(G)$ it holds that $\fstcpri{\pi}{x} <_\pi \fstcpri{\pi}{y}$ implies $x <_\sigma y$.   
Precisely, for an ordering $\pi = (K_1, K_2, \ldots, K_t)$ of maximal cliques of~$G$ and $i\in [t-1]$, let $\vsetprim{\pi}{i}=\bigcup_{j\in [i]}K_j$ be the set of vertices of~$G$ contained in at least one of the first~$i$ cliques from~$\pi$. In addition, for $i\in [t]\setminus \{1\}$, let $K(\pi, i)=K_i\setminus \vsetprim{\pi}{i-1}$ be the set of vertices of~$G$ contained in~$K_i$ but not in any other cliques before~$K_i$ in~$\pi$. Besides, let $K(\pi, 1)=K_1$. Then,~$\sigma$ is a generation of~$\pi$ if and only if it is of the form $(\overrightarrow{K(\pi, 1)}, \overrightarrow{K(\pi, 2)}, \dots, \overrightarrow{K(\pi, t)})$, where for a set~$X$,~$\overrightarrow{X}$ can be any ordering of~$X$.

\begin{lemma}[\cite{RONG22}]
\label{lem-mcs-prim-ordering}
Let~$G$ be a connected chordal graph. Then, an ordering of~$V(G)$ is an MCS ordering of~$G$ if and only if it is a generation of some Prim ordering of~$C(G)$.
\end{lemma}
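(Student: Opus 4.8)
The plan is to prove the two implications separately, each by induction on the number~$t$ of maximal cliques of~$G$; Lemma~\ref{lem-prefix-prim-ordering} supplies the reduction to fewer cliques, and Lemma~\ref{lem-clique-graph-delete-edge} together with the subtree property of clique trees supplies the facts about separators. The base case $t=1$ is immediate: then $G$ is a clique, and every ordering of $V(G)$ is simultaneously an MCS ordering and the unique generation of the single-node Prim ordering.

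For the ``if'' direction, let $\pi=(K_1,\dots,K_t)$ be a Prim ordering of~$C(G)$ and let $\sigma=(\overrightarrow{K(\pi,1)},\dots,\overrightarrow{K(\pi,t)})$ be a generation of it. I would first isolate two facts about the last clique~$K_t$. Writing $S:=K_t\cap\vsetprim{\pi}{t-1}$ and letting~$K_\ell$ be the clique to which Prim attached~$K_t$ at step~$t$: (a) $S=K_t\cap K_\ell$ and $|S|=w_t$, the weight of that attaching edge, which follows by applying the clique-tree subtree property to the maximum spanning tree built by Prim, since every vertex of~$S$ lies in~$K_t$ and in some earlier clique, hence in the unique tree-neighbor~$K_\ell$ of~$K_t$; and (b) every vertex of $K(\pi,t)$ is simplicial in~$G$ with closed neighborhood exactly~$K_t$, because all maximal cliques of~$G$ occur among $K_1,\dots,K_t$, so a vertex lying in~$K_t$ only lies in a unique maximal clique. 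By Lemma~\ref{lem-prefix-prim-ordering}, $C(G)[\{K_1,\dots,K_{t-1}\}]=C(G')$ with $G':=G[\vsetprim{\pi}{t-1}]$, and $(K_1,\dots,K_{t-1})$ is a Prim ordering of~$C(G')$ (each Prim choice in $C(G)$ whose endpoints lie in $\{K_1,\dots,K_{t-1}\}$ remains a heaviest crossing edge of the induced subgraph), so the induction hypothesis makes $\sigma':=(\overrightarrow{K(\pi,1)},\dots,\overrightarrow{K(\pi,t-1)})$ an MCS ordering of~$G'$. It then remains to run MCS on~$G$ following~$\sigma'$ and check that at every step, with visited set $W\subseteq V(G')$, the $\sigma'$-prescribed vertex has at least as many visited neighbors as every unvisited $u\in K(\pi,t)$, which by~(b) number exactly $|S\cap W|$. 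If $S\not\subseteq W$, some unvisited vertex of~$S$ is adjacent to all of $S\cap W$, and MCS-optimality of~$\sigma'$ on~$G'$ finishes the case; if $S\subseteq W$ but $K_\ell\not\subseteq W$, an unvisited vertex of~$K_\ell$ is adjacent to all of~$S$ and plays the same role; and if $K_\ell\subseteq W$, then — since the new part $K(\pi,\ell)$ of~$K_\ell$ is nonempty — block~$\ell$ of~$\sigma'$ has been fully processed, so the current pick lies in a block $p>\ell$ and therefore has at least $w_p$ visited neighbors, while $w_p\ge w_t$ because the edge $K_\ell K_t$, of weight~$w_t$, is a crossing edge available to Prim at step~$p$; as $|S\cap W|=|S|=w_t$ here, the inequality holds. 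Once $V(G')$ is exhausted, the remaining vertices of $K(\pi,t)$ are simplicial with common visited neighborhood containing~$S$, so any order among them is MCS-legal, and $\sigma$ is an MCS ordering of~$G$.

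For the ``only if'' direction, let $\sigma=(v_1,\dots,v_n)$ be an MCS ordering of~$G$. Using the classical fact that the reverse of an MCS ordering of a chordal graph is a perfect elimination ordering, each set $N_G(v_i)\cap\{v_1,\dots,v_{i-1}\}$ is a clique, which we extend together with~$v_i$ to a maximal clique. A short MCS-optimality argument shows that the longest clique prefix $v_1,\dots,v_{k_1}$ of~$\sigma$ is in fact a maximal clique~$K_1$ (any vertex adjacent to all of it would have had strictly larger count than~$v_{k_1+1}$). I would then peel off~$K_1$ and recurse: the first not-yet-covered vertex determines, via its visited neighborhood plus itself, a maximal clique~$K_i$ distinct from all previous ones, and one argues that the vertices of~$K_i$ not covered so far form exactly the next block of~$\sigma$ — i.e., once MCS enters the new part of a clique it finishes that part before leaving, again by counting visited neighbors. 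This produces an ordering $(K_1,\dots,K_t)$ of all maximal cliques with $K(\pi,i)$ equal to the $i$-th block of~$\sigma$, so~$\sigma$ is a generation of it; and MCS-optimality at the step beginning block~$i$ — the chosen vertex has the maximum number of visited neighbors — translates, via the identification of visited-neighbor counts with sizes of labels of~$C(G)$, into the statement that~$K_i$ is joined to $\{K_1,\dots,K_{i-1}\}$ by a crossing edge of maximum weight, i.e., that $(K_1,\dots,K_t)$ is a Prim ordering of~$C(G)$.

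The crux in both directions is the faithful translation between ``MCS prefers the vertex with the most visited neighbors'' and ``Prim prefers the heaviest crossing edge of $C(G)$''. In the ``if'' direction this is the case analysis ensuring the MCS simulation of~$\sigma$ never jumps ahead to a vertex of the last block, which hinges on the observation that a clique cannot be fully visited before its own block is reached, combined with the inequality $w_p\ge w_t$. In the ``only if'' direction it appears both in showing the greedily extracted blocks are consecutive and in verifying the heaviest-crossing-edge property, where one must check that a maximal clique with a heavier intersection with the already-processed cliques really corresponds to an available Prim move at that step. Everything else is routine once Lemmas~\ref{lem-prefix-prim-ordering} and~\ref{lem-clique-graph-delete-edge} are in hand.
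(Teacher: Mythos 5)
A preliminary remark: the paper does not prove Lemma~\ref{lem-mcs-prim-ordering} at all --- it is imported verbatim from \cite{RONG22} --- so there is no in-paper proof to compare yours against; I can only judge the proposal on its own terms. Your ``if'' direction is essentially sound: the induction via Lemma~\ref{lem-prefix-prim-ordering}, the observation that the vertices of $K(\pi,t)$ are simplicial with closed neighborhood $K_t$, and the three-case count comparison all check out (including the subtle points that $K(\pi,\ell)\neq\emptyset$ and that a Prim prefix of $C(G)$ restricts to a Prim ordering of the induced clique graph). You do lean on the unstated, though standard, fact that the spanning tree Prim builds on $C(G)$ is a maximum-weight spanning tree and hence a clique tree of $G$; that is what legitimizes $K_t\cap \vsetprim{\pi}{t-1}=K_t\cap K_\ell$ and should be cited or proved.

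The genuine gap is in the ``only if'' direction, at exactly the point you call the crux. The maximal clique of a block is \emph{not} determined by the block's first vertex $v$ together with its visited neighborhood $D_v=N_G(v)\cap W$: the clique $\{v\}\cup D_v$ may extend to several distinct maximal cliques, and an arbitrary extension breaks the construction. Concretely, take $V(G)=\{a,b,c,d,e\}$ with maximal cliques $\{a,b\}$, $\{b,c,d\}$, $\{b,c,e\}$ (so $d$ and $e$ are nonadjacent); the ordering $(a,b,c,d,e)$ is an MCS ordering, and after peeling $\{a,b\}$ the next vertex $c$ has $\{c\}\cup D_c=\{b,c\}$, which extends to either $\{b,c,d\}$ or $\{b,c,e\}$ --- choosing $\{b,c,e\}$ yields a ``block'' $\{c,e\}$ that is not consecutive in $\sigma$. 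Hence the block clique must be chosen by looking ahead in $\sigma$, and the consecutiveness claim (``once MCS enters the new part of a clique it finishes that part before leaving'') is precisely what must be proved rather than a one-line count: an unvisited vertex \emph{outside} the tentative block clique can tie the count of the unvisited vertices inside it (as $e$ ties $d$ after $(a,b,c)$ above), and MCS may legally take it, so one must show that in every such tie the competing vertex lies in a common maximal clique with all new vertices visited so far in the block, allowing the decomposition to be repaired. That argument, as well as your final assertion that the extracted clique ordering is a Prim ordering --- which needs $K_i\cap\vsetprim{\pi}{i-1}$ to be realized as the label of an actual edge of $C(G)$ of maximum weight among the crossing edges, not merely as a count --- requires the separator and clique-tree machinery of Lemmas~\ref{lem-clique-graph-delete-edge} and~\ref{lem-prefix-prim-ordering} and is currently missing. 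As written, the ``only if'' half is an outline of a workable strategy with its central step unproved.
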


The generic search is a search paradigm so that, except the first visited vertex which can be any arbitrary one, a vertex can be visited as long as at least one of its neighbors has already been visited~\cite{DBLP:journals/siamcomp/Tarjan72}.   
See Algorithm~\ref{alg-generic} for a pseudocode of the generic search.

\begin{algorithm}[htbp]
	\caption{Generic Search}
	\label{alg-generic}
	\prealg
	{A connected graph~$G$ of~$n$ vertices}
	{An ordering $\pi$ of~$V(G)$}
	\begin{algorithmic}[1]
	    \State let~$v$ be any arbitrary vertex in $G$;
	    \State $\pi(v) \leftarrow 1$; {\hfill  /* visit~$v$ */}
		\For{$i = 2, \dots, n$} 
		\State let~$v$ be any unvisited vertex having at least one visited neighbor;
		\State $\pi(v) \leftarrow i$; {\hfill  /* visit~$v$ */}
		\EndFor
		\State \Return $\pi$;
	\end{algorithmic}
\end{algorithm}

\section{Layer Structures of Clique Graphs}
\label{sec-ls}

In this section, we introduce the notion of layer structure, and explore a number of structural properties pertinent to this notion. 
Throughout this section, let~$G$ be a connected chordal graph, and let~$K^{\star}$ be a maximal clique of~$G$. 

Let~$\mathcal{U}$ be the set of all connected components of~$C(G)$ after the deletion of all edges with the minimum weight. For the sake of readability, let us call each $U\in \mathcal{U}$ a {\emph{unit}}. If a maximal clique~$K$ of~$G$ is a node in a unit, we say that~$K$ is {\it{contained}} in this unit. We use~$U^K$ to denote the unit containing~$K$. Besides, we use~$\mathcal{K}(U)$ to denote the set of maximal cliques of~$G$ contained in a unit~$U$, and use~$\vv(U)$ to denote the set of vertices of~$G$ contained in nodes of~$U$, i.e., $\vv(U)=\bigcup_{K\in \mathcal{K}(U)} K$. 
We say that an edge in~$C(G)$ {\emph{crosses}} two units if the two endpoints of the edge are respectively from the two units.

The layer structure we shall study is a refinement of the clique graph~$C(G)$. The following two lemmas pinpoint where the refinement lies. 

\begin{lemma}
\label{ppt-unit-larger-weight}
	The weights of edges of~$C(G)$ whose both endpoints are contained in the same unit are greater than the minimum weight of edges of~$C(G)$.
\end{lemma}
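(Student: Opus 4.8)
The plan is to argue by contradiction, using Lemma~\ref{lem-clique-graph-delete-edge}. Let $\mu$ denote the minimum weight among all edges of~$C(G)$; by definition the units are precisely the connected components of the graph~$H$ obtained from~$C(G)$ by deleting every edge of weight~$\mu$. Fix an edge $e = KK'$ of~$C(G)$ whose endpoints~$K$ and~$K'$ lie in a common unit~$U$. Since~$\mu$ is the minimum edge weight we trivially have $|K\cap K'|\ge \mu$, so it suffices to exclude the possibility $|K\cap K'| = \mu$.

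So suppose, for contradiction, that $|K\cap K'| = \mu$, and put $S = K\cap K'$. Then~$S$ is the label of the weight-$\mu$ edge~$e$, so Lemma~\ref{lem-clique-graph-delete-edge} applies to this~$S$. As~$K$ and~$K'$ are distinct maximal cliques, neither contains the other, so I would pick $u \in K\setminus K'$ and $v \in K'\setminus K$; note $u, v\notin S$ since $S\subseteq K\cap K'$. By the definition of adjacency in~$C(G)$, $S$ is a (minimal) $u$-$v$ separator of~$G$. Hence part~(ii) of Lemma~\ref{lem-clique-graph-delete-edge}, applied to~$u$ and~$v$, implies that the maximal $u$-clique~$K$ and the maximal $v$-clique~$K'$ lie in \emph{different} connected components of $C(G)\ominus S$.

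Now~$C(G)\ominus S$ deletes only the edges of~$C(G)$ whose label equals~$S$, and each such edge has weight $|S| = \mu$; therefore the edge set of~$H$ is contained in the edge set of $C(G)\ominus S$, i.e.,~$H$ is a spanning subgraph of $C(G)\ominus S$. By monotonicity of connectivity, any two nodes lying in the same connected component of~$H$ also lie in the same connected component of $C(G)\ominus S$. Since~$K$ and~$K'$ belong to the same unit~$U$ (a component of~$H$), they lie in the same component of $C(G)\ominus S$, contradicting the conclusion of the previous paragraph. Hence $|K\cap K'| > \mu$, i.e.,~$e$ has weight strictly greater than~$\mu$, which is exactly the claim.

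The one point that needs care — and the only real obstacle — is the discrepancy between the definition of a unit, where \emph{all} minimum-weight edges are deleted simultaneously, and Lemma~\ref{lem-clique-graph-delete-edge}, which concerns the deletion of a single label class~$S$. This is resolved by the observation above: every $S$-labelled edge has weight~$\mu$, so~$H$ is a subgraph of $C(G)\ominus S$, and connectivity in a subgraph only propagates to the supergraph; thus no structural work beyond Lemma~\ref{lem-clique-graph-delete-edge} is required.
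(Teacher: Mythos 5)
Your proof is correct and follows essentially the same route as the paper's: both apply Lemma~\ref{lem-clique-graph-delete-edge}~(ii) to the label $S=K\cap K'$ of a minimum-weight edge to conclude that $K$ and $K'$ fall into different components of $C(G)\ominus S$, hence into different units. You merely phrase it as a direct contradiction rather than a contrapositive, and you make explicit the (correct) observation that the unit decomposition refines the components of $C(G)\ominus S$ because every $S$-labelled edge has minimum weight --- a detail the paper leaves implicit.
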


\begin{proof}
	Let~$\edge{K}{K'}$ be an edge with the minimum weight in~$C(G)$, and let $S = K \cap K'$. 
	Therefore,~$S$ is a $u$-$v$ separator for all $u \in K \setminus S$ and $v \in K' \setminus S$. 
	Then, by Lemma~\ref{lem-clique-graph-delete-edge} (ii),~$K$ and~$K'$ are in different connected components of $C(G) \ominus S$, implying that~$K$ and~$K'$ are contained in different units.
\end{proof}

Lemma~\ref{ppt-unit-larger-weight} equivalently asserts that every unit~$U$ is exactly the subgraph of~$C(G)$ induced by~$\mathcal{K}(U)$. Or, to put it another way, an edge in~$C(G)$  crosses two units if and only if it has the minimum weight in~$C(G)$. 

\begin{lemma}
\label{ppt-same-label-bet-unit}
Let~$U$ and~$U'$ be two units from~$\mathcal{U}$ so that there are edges in~$C(G)$ crossing~$U$ and~$U'$. Then, all edges in~$C(G)$ crossing~$U$ and~$U'$ have the same label.
\end{lemma}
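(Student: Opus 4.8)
The plan is to show that any two crossing edges $\edge{K_1}{K_1'}$ and $\edge{K_2}{K_2'}$ (with $K_1,K_2\in\mathcal{K}(U)$ and $K_1',K_2'\in\mathcal{K}(U')$) carry the same label. By Lemma~\ref{ppt-unit-larger-weight}, both labels are minimum-weight labels of $C(G)$, so they have the same cardinality $w$; it therefore suffices to prove $K_1\cap K_1' = K_2\cap K_2'$, and in fact a single inclusion will do by symmetry. I would denote $S_1 = K_1\cap K_1'$ and $S_2 = K_2\cap K_2'$.

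First I would connect the two crossing edges inside the units. Since $U$ is a connected subgraph of $C(G)$ (Lemma~\ref{ppt-unit-larger-weight}), there is a path in $U$ from $K_1$ to $K_2$; similarly a path in $U'$ from $K_1'$ to $K_2'$. All edges along these paths lie strictly above the minimum weight, hence their weights are all $>w$. The key step is a ``clique-tree surgery'' argument: recall from Lemma~\ref{lem-clique-graph} that $C(G)$ is the union of all clique trees of $G$, so each individual crossing edge $\edge{K_i}{K_i'}$ appears in some clique tree, and the path edges inside $U$ and inside $U'$ also appear in clique trees. I would argue that one can assemble a single clique tree $T$ containing the $U$-path from $K_1$ to $K_2$, the $U'$-path from $K_1'$ to $K_2'$, and exactly one crossing edge, say $\edge{K_1}{K_1'}$; this is possible because along the $K_1$–$K_2$ path and the $K_1'$–$K_2'$ path every separator has size $>w$, while $S_1$ has size $w$, so $\edge{K_1}{K_1'}$ is a minimum-weight edge of the cycle formed and can be kept while exchanging edges to restore the tree property. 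In this tree $T$, the edge $\edge{K_1}{K_1'}$ is the unique bridge between the side containing $U$-nodes and the side containing $U'$-nodes, so by the clique-tree separator property $S_1 = K_1\cap K_1'$ is the (minimal) separator in $G$ between, e.g., $K_2\setminus S_1$ and $K_2'\setminus S_1$.

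Next, using that $\edge{K_2}{K_2'}$ is an edge of $C(G)$, the label $S_2 = K_2\cap K_2'$ is a minimal $u$-$v$ separator for every $u\in K_2\setminus K_2'$ and $v\in K_2'\setminus K_2$. I would then compare $S_1$ and $S_2$ as $K_2$–$K_2'$–type separators: on the one hand $S_1$ separates $K_2$ from $K_2'$ in $G$ (from the tree $T$), and since for chordal graphs every minimal separator between two maximal cliques is exactly their intersection along the connecting tree path, one gets $S_2\subseteq S_1$; on the other hand, running the symmetric construction with a clique tree built around the crossing edge $\edge{K_2}{K_2'}$ yields $S_1\subseteq S_2$. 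Hence $S_1 = S_2$. Extending from the two fixed edges to all crossing edges between $U$ and $U'$ is immediate, since the argument did not use anything special about $\edge{K_1}{K_1'}$ and $\edge{K_2}{K_2'}$ beyond their being crossing edges.

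I expect the main obstacle to be the clique-tree surgery in the second paragraph: justifying cleanly that one can build a clique tree simultaneously respecting the chosen path inside $U$, the chosen path inside $U'$, and a single prescribed crossing edge. The natural tool is the exchange property of clique trees (any non-tree edge of $C(G)$ can be swapped into a clique tree at the cost of removing a minimum-weight edge from the induced cycle, cf.\ the maximum-weight-spanning-tree characterization of clique trees underlying Lemma~\ref{lem-clique-graph}), applied so that the crossing edge of weight exactly equal to the global minimum is never the one removed. Once that is in place, the separator comparisons are routine consequences of the defining subtree property of clique trees together with Lemma~\ref{lem-clique-graph-delete-edge}.
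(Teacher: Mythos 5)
Your overall strategy (embed the relevant edges in a single clique tree, then compare separators along tree paths) is a genuinely different route from the paper's, but its central step has a real gap. The ``clique-tree surgery'' asks for a clique tree containing an \emph{arbitrary} $K_1$--$K_2$ path inside~$U$, an arbitrary $K_1'$--$K_2'$ path inside~$U'$, and one crossing edge. Arbitrary paths inside a unit need not lie in any clique tree. Concretely, take $G$ on vertices $\{1,\dots,7\}$ with maximal cliques $A=\{1,2,3,4\}$, $B=\{2,3,4,5\}$, $C=\{3,4,6\}$, $D=\{6,7\}$: here $\edge{A}{B}$, $\edge{A}{C}$, $\edge{B}{C}$, $\edge{C}{D}$ are all edges of $C(G)$ with weights $3,2,2,1$, the unit $\{A,B,C\}$ is a triangle, and the path $A$--$C$--$B$ inside that unit is contained in no clique tree (vertex~$2$ lies in $A$ and $B$ but not in $C$, so the subtree property fails). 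The exchange argument you invoke protects the crossing edge from being swapped out, but it does nothing to guarantee that the two intra-unit paths are simultaneously embeddable in the first place. The step can be repaired---choose the intra-unit paths inside maximum-weight spanning trees of $U$ and $U'$ and run a Kruskal-style greedy argument in decreasing weight order---but that repair leans on the characterization of clique trees as exactly the maximum-weight spanning trees of $C(G)$, which is strictly stronger than Lemma~\ref{lem-clique-graph} as stated, and none of this is carried out in your proposal. The subsequent separator comparison ($S_2\subseteq S_1$ via the subtree property, then symmetry) is fine once a suitable tree exists.

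For comparison, the paper avoids trees entirely and argues by contradiction in three lines: if two crossing edges $\edge{K_1}{K_2}$ and $\edge{K_3}{K_4}$ had distinct labels $S\neq S'$, then in $C(G)\ominus S$ both units stay internally connected (Lemma~\ref{ppt-unit-larger-weight}, since intra-unit edges are heavier than the minimum weight $|S|$) and remain joined by the surviving edge $\edge{K_3}{K_4}$, so $K_1$ and $K_2$ stay connected in $C(G)\ominus S$ --- contradicting Lemma~\ref{lem-clique-graph-delete-edge}~(ii). You may want to compare this connectivity argument with your tree-based one: it uses only tools already established in the paper and sidesteps the embeddability question entirely.
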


\begin{proof}
	Towards a contradiction, assume that~$C(G)$ contains two distinct edges~$\edge{K_1}{K_2}$ and~$\edge{K_3}{K_4}$ crossing~$U$ and~$U'$ with different labels. 
	Let $S = K_1 \cap K_2$ and let $S' = K_3 \cap K_4$. So, $S \neq S'$. 
	Obviously,~$U$ and~$U'$ both remain connected in $C(G) \ominus S$.  
	Moreover, since the edge $\edge{K_3}{K_4}$ is present in $C(G) \ominus S$,~$U$ and~$U'$ are in the same connected component of $C(G) \ominus S$. 
 This means that~$K_1$ and~$K_2$ are in the same connected component of $C(G)\ominus S$. 
	However,  this contradicts Lemma~\ref{lem-clique-graph-delete-edge}~(ii). 
\end{proof}

Now we are ready to define the layer structure.

\begin{definition}[Layer Structure]
\label{def-layer-structure}
    The layer structure of~$C(G)$ rooted by~$K^{\star}$ is a graph with the vertex set~$\mathcal{U}$ so that there is an edge between two units in~$\mathcal{U}$ if and only if there exists at least one edge in~$C(G)$ crossing the two units. The label and the weight of an edge $\edge{U}{U'}$ in the layer structure are respectively $K\cap K'$ and $\abs{K\cap K'}$ where $\edge{K}{K'}$ can be any edge in~$C(G)$ crossing~$U$ and~$U'$. The unit~$U^{K^{\star}}$ is called the {{root}} of the layer structure. A unit is in the $i$-th layer if it is at a distance~$i$ from the root, where the distance between two units is defined as the length of a shortest path between them in the layer structure. 
\end{definition}

For an illustration of Definition~\ref{def-layer-structure}, see Figure~\ref{fig-layer-structure}. Note that due to Lemma~\ref{ppt-same-label-bet-unit}, the labels and the weights of edges in the layer structure are well-defined. 

Let~$\mathcal{L}_i$ be the set of all units in the $i$-th layer, and let $\mathcal{L}_{\leq i}=\bigcup_{j\in [i]\cup \{0\}}\mathcal{L}_j$. Obviously, $\mathcal{L}_0=\{U^{K^{\star}}\}$. In addition, if two units from respectively two layers~$\mathcal{L}_i$ and~$\mathcal{L}_j$ are adjacent, it holds that $|i-j|\leq 1$. 
Recall that~$G$ is connected. Hence,~$C(G)$ and the layer structure are connected too. As a result, every unit in~$\mathcal{U}$ is in some layer.    

\begin{figure}[htbp]
\centering{
\includegraphics[width=0.5\textwidth]{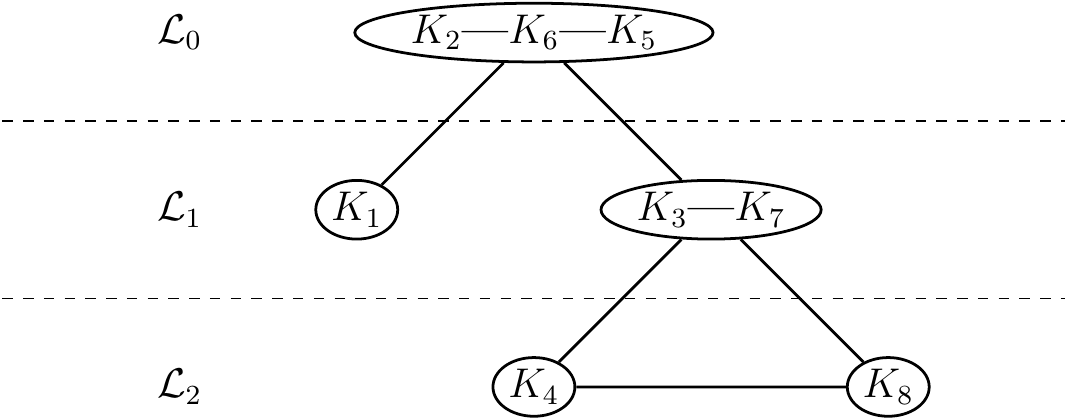}
}
	\caption{The layer structure of the chordal graph in Figure~\ref{fig-clique-graph-example} rooted by~$K_2$ (or~$K_5$,~$K_6$).}
	\label{fig-layer-structure}
\end{figure}

Lemmas~\ref{ppt-unit-larger-weight} and~\ref{ppt-same-label-bet-unit} suggest that the layer structure is nothing special but a reinspection of the clique graph~$C(G)$ by (1) grouping nodes into units which are connected components of~$C(G)$ without edges of the minimum weight; (2) fixing a unit as the root and arranging units into layers according to their distances to the root; and (3) regarding edges crossing two units as one edge between the units. 
That said, such a reinspection shed light on numerous significant properties which can be exploited for solving {\prob{PSOP-MCS}}. 
Now let us start the exploration on these properties.

\begin{property}
\label{cor-unit-separator} 
	Let $\edge{U}{U'}$ be an edge in the layer structure. Then, every path between~$U$ and~$U'$ in the layer structure contains an edge with the same label as~$\edge{U}{U'}$.
\end{property}

\begin{proof}
Let~$\edge{K}{K'}$ be an edge in~$C(G)$ crossing the two units~$U$ and~$U'$. 
    	Let $S = K \cap K'$. It is clear that~$S$ is both the label of~$\edge{K}{K'}$ and the label of~$\edge{U}{U'}$. 
	By Lemma~\ref{lem-clique-graph-delete-edge}~(ii),~$K$ and~$K'$ are disconnected in $C(G) \ominus S$. 
	It follows that every $K$-$K'$ path in~$C(G)$ contains at least one edge with the label~$S$. 
	Then, by Lemma~\ref{ppt-unit-larger-weight}, we know that every $U$-$U'$ path in the layer structure contains at least one edge with the label~$S$.
\end{proof}

\begin{property}
\label{ppt-layer-adjacency} 
	Let $\edge{U_1}{U_2}$ and $\edge{U_3}{U_4}$ be two distinct edges of the layer structure with the same label~$S$.
	Then, the units in the set $\{U_1, U_2, U_3, U_4\}$ are pairwise adjacent in the layer structure, and all edges among them have the same label~$S$. 
\end{property}

\begin{proof}
	Let $\edge{K_1}{K_2}$ and $\edge{K_3}{K_4}$ be two distinct edges of~$C(G)$ both with the label~$S$, where $K_i \in \mathcal{K}(U_i)$ for every $i \in [4]$. We first show that~$U_1$,~$U_2$,~$U_3$, and~$U_4$ are pairwise adjacent in the layer structure. Assume, for the sake of contradiction, that one from $\{U_1, U_2\}$ is not adjacent to one from $\{U_3, U_4\}$. By symmetry, suppose that $U_1\neq U_3$, and~$U_1$ is not adjacent to~$U_3$ in the layer structure. 
	
	We claim that~$K_1$ and~$K_3$ are disconnected in $C(G) \ominus S$. Assume for contradiction that~$K_1$ and~$K_3$ are connected in $C(G)\ominus S$. Let~$C'$ be the connected component of $C(G) \ominus S$ containing~$K_1$ and~$K_3$, and let~$V'$ be the set of vertices of~$G$ contained in nodes of~$C'$. By the minimality of~$|S|$, there exists a Prim ordering of~$C(G)$ so that all maximal cliques of~$G$ contained in~$C'$ are visited before all the other maximal cliques of~$G$. Additionally, in light of Lemma~\ref{lem-clique-graph-delete-edge}~(ii), the subgraph of~$C(G)$ induced by~$C'$ does not contain any edge with the label~$S$. 
    Then, by Lemma~\ref{lem-prefix-prim-ordering}, we know that~$C'$ is the clique graph of~$G[V']$. Obviously, $S \subsetneq V'$. By Lemma~\ref{lem-clique-graph},~$C'$ contains all clique trees of~$G[V']$ as subgraphs. 
    We fix a clique tree of~$G[V']$. There is a unique $K_1$-$K_3$ path in the clique tree, and by the definition of clique trees, all nodes on this path are $S$-cliques. Obviously, this path is also present in~$C(G)\ominus S$. 
    The label of each edge on this path cannot be~$S$, since such edges are absent in $C(G)\ominus S$. So, the labels of all edges on this path properly contain~$S$. This indicates that~$K_1$ and~$K_3$ are contained in the same unit in the layer structure. However, this contradicts that $U_1 \neq U_3$.

	So, we know that~$K_1$ and~$K_3$ are disconnected in $C(G) \ominus S$. Then, according to Lemma~\ref{lem-clique-graph-delete-edge} (ii),~$S$ is a~$u$-$v$ separator for all $u \in K_1 \setminus S$ and all $v \in K_3 \setminus S$. 
	By the minimality of~$|S|$ and the definition of clique graphs,~$\edge{K_1}{K_3}$ is an edge of~$C(G)$ with the label~$S$. As a result, $\edge{U_1}{U_3}$ is an edge of the layer structure, contradicting that~$U_1$ and~$U_3$ are nonadjacent in the layer structure. 
	
	Now we can conclude that the units in $\{U_1, U_2, U_3, U_4\}$ are pairwise adjacent in the layer structure. 
	Then, from Property~\ref{cor-unit-separator} and the fact that both~$\edge{U_1}{U_2}$ and~$\edge{U_3}{U_4}$ have the label~$S$, it follows that the labels of the edges among~$U_1$,~$U_2$,~$U_3$, and~$U_4$ are all~$S$.
\end{proof}

Note that in Property~\ref{ppt-layer-adjacency} it may be that $\{U_1, U_2\}\cap \{U_3, U_4\}\neq\emptyset$. 

\begin{property}
\label{ppt-layer-parent}
Every unit~$U$ in the $\ii$-th layer~$\mathcal{L}_{\ii}$ where ${\ii}\geq 1$ is adjacent to exactly one unit from the layer~$\mathcal{L}_{\ii-1}$. 
\end{property}

\begin{proof}
Let~$U$ be a unit from the $\ii$-th layer. 
	The statement trivially holds  for $\ii = 1$. It remains to consider the case where $\ii \geq 2$.
	Assume, for the sake of contradiction, that there are two distinct units~$U_1$ and~$U_2$ in the layer~$\mathcal{L}_{\ii-1}$ both adjacent to~$U$ in the layer structure. Let~$S_1$ and~$S_2$ be the labels of~$\edge{U}{U_1}$ and~$\edge{U}{U_2}$, respectively.
	By Definition~\ref{def-layer-structure}, there is a path from the root to~$U_1$, and a path from the root to~$U_2$ in the layer structure. Then, as~$\ii\geq 2$, there exists a $U_1$-$U_2$ path~$P$ of length at least two in the layer structure such that all inner units of~$P$ are from~$\mathcal{L}_{\leq \ii-2}$. Our proof is completed by distinguishing between the following two cases. 
	
	\begin{description}
		\item[Case 1:] $S_1 = S_2$.
		
		By Property~\ref{ppt-layer-adjacency}, $\edge{U_1}{U_2}$ is an edge with the label~$S_1$ in the layer structure. By Property~\ref{cor-unit-separator}, there exists an edge in~$P$ with the label~$S_1$.
		By Property~\ref{ppt-layer-adjacency},~$U$ is adjacent to the two endpoints of this edge, which is impossible since~$U$ is from the $\ii$-th layer but at least one of the two endpoints of the edge is from~$\mathcal{L}_{\leq \ii-2}$.
		
		\item[Case 2:] $S_1 \neq S_2$.
		
		By Property~\ref{cor-unit-separator}, there exists an edge in $E(P) \cup \{\edge{U}{U_2}\}$ with the label~$S_1$. As $S_1 \neq S_2$, this edge belongs to~$P$.
		By Property~\ref{ppt-layer-adjacency},~$U$ is adjacent to the two endpoints of this edge. However, analogous to the discussion in Case~1, we know that this is impossible. 
	\end{description}
As both cases lead to some contradictions, we know that~$U$ is adjacent to exactly one unit from~$\mathcal{L}_{\ii-1}$. 
\end{proof}

Now for each unit~$U$ from a layer $\mathcal{L}_{\ii}$ where ${\ii}\geq 1$, we call the only unit from the layer $\mathcal{L}_{{\ii}-1}$ adjacent to~$U$ the {\emph{parent}} of~$U$, and use~$\parent{U}$ to denote it. Correspondingly, we say that~$U$ is a {\emph{child}} of~$\parent{U}$. Furthermore, for a unit~$U$ from a layer~$\mathcal{L}_i$ and a unit~$U'$ from a layer~$\mathcal{L}_j$ such that $i<j$, we say that~$U'$ is a {\emph{descendant}} of~$U$ if there is a path from the root to~$U'$ through~$U$ (i.e.,~$U$ is also on the path) in the layer structure. For a unit~$U$, let~$\dst{U}$ be the set of all descendants of~$U$, and let $\cdst{U}=\dst{U}\cup \{U\}$.
Note that every unit is in a higher layer (assuming the root is in the highest layer) than any of its descendants.

\begin{property}
\label{ppt-bag}
	Let~$U_1$ and~$U_2$ be two units in the same layer~$\mathcal{L}_i$ where $i\geq 1$. Then, the following statements are equivalent:
	\begin{enumerate}
	    \item[(1)] $U_1$ and $U_2$ are adjacent.
	    \item[(2)] $\parent{U_1}=\parent{U_2}$, and the label of the edge between~$U_1$ and its parent equals that between~$U_2$ and its parent. 
	\end{enumerate}
\end{property}

\begin{proof}
From Property~\ref{ppt-layer-adjacency}, we know that Statement~(2) implies Statement~(1). In the following, we show that Statement~(1) implies Statement~(2).     

    Assume that~$U_1$ and~$U_2$ are adjacent, and let~$S$ be the label of the edge between them. Let $U_3 = {\parent{U_1}}$ and let $U_4 = {\parent{U_2}}$.  
		We first prove that~$U_1$ and~$U_2$ have the same parent, i.e., $U_3=U_4$. Assume, for contradiction, that $U_3 \neq U_4$. By Properties~\ref{ppt-layer-adjacency} and~\ref{ppt-layer-parent}, this implies that neither the label of~$\edge{U_1}{U_3}$ nor the label of~$\edge{U_2}{U_4}$ is~$S$. In addition, it also holds that $i > 1$, implying that there exists a $U_3$-$U_4$ path~$P$ of length at least two in the layer structure whose inner units are all from~$\mathcal{L}_{\leq i-2}$ (cf.\ the proof of Property~\ref{ppt-layer-parent}). Then, by Property~\ref{cor-unit-separator}, there exists an edge in~$P$ with the label~$S$. By Property~\ref{ppt-layer-adjacency}, the two endpoints of this edge are adjacent to~$U_1$ in the layer structure. However, this is impossible since at least one endpoint of this edge is from $\mathcal{L}_{\leq i-2}$ but~$U_1$ is from the layer~$\mathcal{L}_i$. This completes the proof for that $U_3=U_4$. 
	    Having $U_3=U_4$, from Properties~\ref{cor-unit-separator} and~\ref{ppt-layer-adjacency}, it follows that the labels of~$\edge{U_1}{U_3}$ and~$\edge{U_2}{U_4}$ are both~$S$. 
\end{proof}

Property~\ref{ppt-bag} implies that if a subset of units in the same layer~$\mathcal{L}_i$ are connected in the layer structure restricted to~$\mathcal{L}_i$, then they are pairwise adjacent. 
For ease of exposition, we group units in the same layer into bags so that two units are in the same bag if they are adjacent. By~Property~\ref{ppt-bag}, all units in the same bag have the same parent. By Property~\ref{ppt-layer-parent}, if we ignore edges inside all bags in the layer structure, we obtain a tree rooted at~$U^{K^{\star}}$. An important consequence of this fact is that every path connecting two units from the same layer~$\mathcal{L}_i$ is completely contained in~$\mathcal{L}_{\leq i}$. 

In addition, Property~\ref{ppt-layer-parent} indicates that every nonroot unit~$U$ in a layer~$\mathcal{L}_i$ is adjacent to exactly one unit in~$\mathcal{L}_{\leq i-1}$, and this unit is its parent in~$\mathcal{L}_{i-1}$. Property~\ref{ppt-bag} further strengthens that~$\parent{U}$ indeed separates all descendants of~$\parent{U}$ from all the other units, if there are any. In other words,~$\{\parent{U}\}$ is a $U'$-$U''$ separator in the layer structure for all $U'\in \dst{\parent{U}}$ and all $U''\in \mathcal{U}\setminus \cdst{\parent{U}}$ whenever $\mathcal{U}\setminus \cdst{\parent{U}}\neq\emptyset$. An important consequence of this fact is that every generic search ordering of the layer structure starting from the root visits every nonroot unit after (not necessarily consecutive) its parent.

The next property identifies the label of the edge between two adjacent units. 

\begin{property}
\label{ppt-label-between-units}
The label of every edge between two units~$U$ and~$U'$ in the layer structure is $\vv(U)\cap \vv(U')$.
\end{property}

\begin{proof}
    Let~$U$ and~$U'$ be two adjacent units in the layer structure. Let $S=\vv(U)\cap \vv(U')$, and let~$S'$ be the label of the edge between~$U$ and~$U'$ in the layer structure. Clearly,~$S'\subseteq S$. Therefore, to complete the proof, it needs only to show that every~$v\in S$ is contained in~$S'$. In light of  Properties~\ref{ppt-layer-parent} and \ref{ppt-bag}, only the two cases described below may occur. 
    \begin{description}
        \item[Case~1:] one of $U$ and $U'$ is the parent of the other.
        
        For the sake of contradiction, assume that there exists $v\in S\setminus S'$. By Lemma~\ref{ppt-same-label-bet-unit}, all edges between~$\mathcal{K}(U)$ and~$\mathcal{K}(U')$ in~$C(G)$ have the same label~$S'$. By the definition of the layer structure (Definition~\ref{def-layer-structure}),~$\abs{S'}$ is a minimum edge weight in~$C(G)$. Then, by Properties~\ref{ppt-layer-parent} and \ref{ppt-bag},~$U$ and~$U'$ are disconnected in the layer structure after removing all edges with the label~$S'$. This is equivalent to every $K\in \mathcal{K}(U)$ and every $K'\in \mathcal{K}(U')$ being disconnected in $C(G)\ominus S'$. However, as $v\in \vv(U)\cap \vv(U')$, this violates Lemma~\ref{lem-clique-graph-delete-edge}~(i).
        
         \item[Case~2:] $U$ and $U'$ are in the same bag. 
         
         By Property~\ref{ppt-bag},~$U$ and~$U'$ have the same parent, say~$\hat{U}$, and the labels of the edges among the units in the same bag as~$U$ and their parent~$\hat{U}$ are~$S'$. Then, similar to Case~1, it is easy to see that~$U$ and~$U'$ are disconnected in the layer structure after removing all edges with the label~$S'$, which violates Lemma~\ref{lem-clique-graph-delete-edge}~(i) too. 
    \end{description}
As both cases violate Lemma~\ref{lem-clique-graph-delete-edge}, we know that Property~\ref{ppt-label-between-units} holds.
\end{proof}

\begin{property}
\label{ppt-unit-contain-vertex} 
	Let~$v$ be a vertex in~$G$. Let~$\ii$ be the minimum integer such that~$\mathcal{L}_{\ii}$ contains a maximal $v$-clique. Then, all maximal $v$-cliques of~$G$ contained in~$\mathcal{L}_{\ii}$ are within one unit.
\end{property}

\begin{proof}
Let~$v$ and~$\ii$ be as stipulated in the statement of Property~\ref{ppt-unit-contain-vertex}.  
Assume for contradiction that there exist two maximal $v$-cliques~$K$ and~$K'$ respectively from two different units~$U$ and~$U'$ in the layer~$\mathcal{L}_{\ii}$. Our proof is completed by considering the following two cases. 
\begin{description}
    \item[Case~1:] $U$ and~$U'$ are adjacent in the layer structure.
   
   By Property~\ref{ppt-label-between-units},~$v$ is in the label of the edge $\edge{U}{U'}$. By Property~\ref{ppt-bag},~$U$ and~$U'$ have the same parent, say,~$\hat{U}$. Moreover, by Properties~\ref{ppt-layer-adjacency} and~\ref{ppt-bag}, the edges~$\edge{U}{\hat{U}}$ and $\edge{U'}{\hat{U}}$ have the same label as the edge~$\edge{U}{U'}$. It follows that~$\hat{U}$ contains some $v$-clique of~$G$. However, this contradicts that~$\ii$ is the minimum integer with~$\mathcal{L}_{\ii}$ containing a maximal $v$-clique of~$G$.
   
   \item[Case~2:] $U$ and~$U'$ are not adjacent in the layer structure.
   
   Note that in this case~$\ii\geq 1$.  As the subgraph of~$C(G)$ induced by the set of all maximal $v$-cliques is connected, there is a $K$-$K'$ path~$P$ in~$C(G)$ so that~$v$ is in the label of every edge on the path. By Properties~\ref{ppt-layer-parent} and~\ref{ppt-bag}, this path contains at least one maximal clique of~$G$ contained in~$\parent{U}$. 
However, similar to Case~1, this is in contradiction with the definition of~$\ii$.
\end{description}
As the above two cases cover all possibilities, and each of them leads to some contradictions, Property~\ref{ppt-unit-contain-vertex} holds. 
\end{proof}

Property~\ref{ppt-unit-contain-vertex} shows that for every vertex~$v$ in~$G$, there is a unique unit that contains a maximal $v$-clique and is at the least distance to the root in the layer structure. 
Let~$U^v$ be such a unique unit for~$v$.  

\begin{property}
    \label{pro-ancestor}
    For every vertex $v\in V(G)$, every unit containing a maximal $v$-clique of~$G$ is from~$\cdst{U^v}$.
\end{property}

\begin{proof}
    Assume, for the sake of contradiction, there exists $v\in V(G)$ and a unit~$U$ such that $v\in \vv(U)$ and~$U\not\in \cdst{U^v}$. By Property~\ref{ppt-unit-contain-vertex},~$U$ and~$U^v$ cannot be in the same layer. Then, as $U\not\in \cdst{U^v}$, every~$U^v$-$U$ path in the layer structure contains the parent of~$U^v$. As all maximal $v$-cliques are connected in~$C(G)$, and $v\in \vv(U)\cap \vv(U^v)$, we know that $v\in \vv(\parent{U^v})$. However, this contradicts the definition of~$U^v$.
 \end{proof}

\begin{property}
\label{ppt-unit-entrance}
	Let~$U$ and~$U'$ be two  units from the same bag, and let~$\hat{U}$ be their parent. Then, for every edge~$\edge{K}{K'}$ in~$C(G)$ so that $K \in \mathcal{K}(U)$ and $K' \in 
\mathcal{K}({U'})$, there exists $\hat{K}\in \mathcal{K}({\hat{U}})$ which is adjacent to both~${K}$ and~$K'$ in~$C(G)$.
\end{property}

\begin{proof}
	Let $K\in \mathcal{K}(U)$ and $K'\in \mathcal{K}(U')$ be as stipulated in Property~\ref{ppt-unit-entrance}. Let $S = K \cap K'$.
	As~$U$ and~$U'$ are in the same bag, they are adjacent in the layer structure. By Lemma~\ref{ppt-same-label-bet-unit} and Definition~\ref{def-layer-structure}, the label of the edge~$\edge{U}{U'}$ is~$S$.  This means that~$\abs{S}$ is the minimum weight of edges in~$C(G)$. 
	By Properties~\ref{ppt-layer-adjacency} and~\ref{ppt-bag}, the labels of the edges~$\edge{U}{\hat{U}}$ and~$\edge{U'}{\hat{U}}$ are also~$S$. 
	As a result, there exists~$\hat{K}\in \mathcal{K}({\hat{U}})$ so that $S \subseteq \hat{K}$. By Properties~\ref{ppt-layer-parent} and~\ref{ppt-bag},~$\hat{K}$ is disconnected from~$K$ in~$C(G) \ominus S$. Then, according to Lemma~\ref{lem-clique-graph-delete-edge} (ii) and the minimality of $|S|$,~$S$ is a minimal~$u$-$v$ separator in~$G$ for all $u \in K \setminus \hat{K}$ and all $v \in \hat{K} \setminus K$. Therefore,~$\hat{K}$ and~$K$ are adjacent in~$C(G)$. Similarly,~$\hat{K}$ is also adjacent to~$K'$ in~$C(G)$.
\end{proof}

Now we study two lemmas which provide insight into connections among MCS orderings of~$G$, Prim orderings of~$C(G)$, and generic search orderings of the layer structure.  
We say that a Prim ordering~$\pi$ of the clique graph~$C(G)$ {\emph{respects}} a partial order~$R$ {\oover}~$V(G)$ if for every $(x, y)\in R$ it holds that $\fstcpri{\pi}{x}<_{\pi} \fstcpri{\pi}{y}$ or $\fstcpri{\pi}{x} = \fstcpri{\pi}{y}$.  
By saying that an ordering starts with elements in a subset, we mean the elements in the subset are before all the other elements in the ordering. 

\begin{lemma}
\label{lem-mcs-prim-ordering-b}
	Let~$R$ be a partial order {\oover}~$V(G)$. 
	There is an MCS ordering of~$G$ extending~$R$ if and only if there is a Prim ordering of~$C(G)$ respecting~$R$.  
	Moreover, given a Prim ordering of~$C(G)$ which starts with some node~$K$ and respects~$R$, we can construct an MCS ordering of~$G$ starting with the vertices in~$K$ and extending~$R$ in polynomial time.
\end{lemma}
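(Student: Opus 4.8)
The plan is to leverage Lemma~\ref{lem-mcs-prim-ordering}, which already characterizes MCS orderings of~$G$ as exactly the generations of Prim orderings of~$C(G)$. So the task reduces to relating the partial order~$R$ on~$V(G)$ to the notion of a Prim ordering ``respecting''~$R$. For the forward direction, suppose~$\sigma$ is an MCS ordering extending~$R$. By Lemma~\ref{lem-mcs-prim-ordering}, $\sigma$ is a generation of some Prim ordering $\pi = (K_1, \dots, K_t)$ of~$C(G)$; that is, $\sigma$ has the block form $(\overrightarrow{K(\pi,1)}, \dots, \overrightarrow{K(\pi,t)})$. I would check that $\pi$ respects~$R$: take $(x,y)\in R$, so $x <_\sigma y$. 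If $\fstcpri{\pi}{x} = \fstcpri{\pi}{y}$ we are done; otherwise $x$ and $y$ lie in different blocks, and since $x$ appears in the block $K(\pi,i)$ with $i$ the index of $\fstcpri{\pi}{x}$ (and likewise for $y$), $x <_\sigma y$ forces $\fstcpri{\pi}{x} <_\pi \fstcpri{\pi}{y}$. Hence $\pi$ respects~$R$.

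For the backward direction and the ``moreover'' part simultaneously, suppose $\pi = (K_1, \dots, K_t)$ is a Prim ordering of~$C(G)$ starting with $K = K_1$ and respecting~$R$. I want to build a generation $\sigma$ of $\pi$ — which by Lemma~\ref{lem-mcs-prim-ordering} is automatically an MCS ordering starting with the vertices of $K_1$ — that additionally extends~$R$. Inside each block $K(\pi,i)$ I have freedom to order the vertices arbitrarily, so I need to show the vertices within a single block can be linearly ordered consistently with $R$, and that across blocks $R$ never forces a ``backward'' pair. The cross-block case is immediate from the definition of ``respects'': if $(x,y)\in R$ and $x,y$ are in different blocks, then $\fstcpri{\pi}{x} <_\pi \fstcpri{\pi}{y}$, so $x$'s block precedes $y$'s block, giving $x <_\sigma y$ regardless of the intra-block orderings. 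For the intra-block case, I need: $\rsto{R}{K(\pi,i)}$ is a partial order (clear, as a restriction of one) and it has a linear extension (clear, every finite partial order does — a topological sort). Concatenating these block-wise linear extensions in the order of the blocks yields the desired $\sigma$; and since each step is a topological sort on at most $|V(G)|$ elements and there are at most $|V(G)|$ blocks, the construction is polynomial. One subtlety to verify: I must make sure there is no pair $(x,y)\in R$ with $x,y$ in the same block but the partial order restricted to the block being incompatible with any linear order — but this cannot happen, since the restriction of a partial order is a partial order, which always extends to a linear one.

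The only genuinely delicate point, and the one I would write out most carefully, is confirming that ``$\pi$ respects $R$'' is precisely the right hinge condition: that it is both necessary (forward direction, where I must handle the case $\fstcpri{\pi}{x}=\fstcpri{\pi}{y}$ correctly, noting that then $x$ and $y$ land in the same block and the generation is free to put $x$ before $y$) and sufficient (backward direction, where the block structure of generations from Lemma~\ref{lem-mcs-prim-ordering} does all the heavy lifting). I do not expect any real obstacle here; the lemma is essentially a bookkeeping translation between the vertex-level partial order and the clique-level Prim ordering, mediated by the block decomposition $(\overrightarrow{K(\pi,1)}, \dots, \overrightarrow{K(\pi,t)})$ supplied by Lemma~\ref{lem-mcs-prim-ordering}. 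The polynomial-time claim follows by observing the construction is just $t \le |V(G)|$ topological sorts.
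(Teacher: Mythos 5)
Your proposal is correct and follows essentially the same route as the paper's proof: both directions hinge on Lemma~\ref{lem-mcs-prim-ordering}, the block decomposition $(\overrightarrow{K(\pi,1)},\dots,\overrightarrow{K(\pi,t)})$, and realizing each block by a linear extension of $\rsto{R}{K(\pi,i)}$, with the same observation that this construction is polynomial. No gaps.
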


\begin{proof}
We start the proof with first statement. 
For the forward direction, let~$\sigma$ be an MCS ordering of~$G$ extending~$R$. By Lemma~\ref{lem-mcs-prim-ordering}, there exists a Prim ordering~$\pi$ of~$C(G)$ that generates~$\sigma$.  
	For every $(x, y)\in R$, since~$\sigma$ extends $R$, it holds that $x <_\sigma y$. As~$\pi$ generates~$\sigma$, it holds that $\fstcpri{\pi}{x} <_\pi \fstcpri{\pi}{y}$ or $\fstcpri{\pi}{x} = \fstcpri{\pi}{y}$. 
	For the backward direction, let $\pi = (K_1, K_2, \ldots, K_t)$ be a Prim ordering of~$C(G)$ so that for all $(x,y)\in R$ it holds that either $\fstcpri{\pi}{x} <_\pi \fstcpri{\pi}{y}$ or 
	$\fstcpri{\pi}{x} = \fstcpri{\pi}{y}$. Let~$\sigma$ be a generation of~$\pi$ so that for every $i\in [t]$ it holds that~$\sigma$ restricted to~$K(\pi, i)$ extends~$R$ restricted to~$K(\pi, i)$, i.e.,  $\rsto{\sigma}{K(\pi, i)}$ is a linear extension of $\rsto{R}{K(\pi, i)}$. As for every $i, j\in [t]$ such that $i\neq j$, $K(\pi, i)$ and $K(\pi, j)$ are disjoint,~$\sigma$ is well-defined. 
	By Lemma~\ref{lem-mcs-prim-ordering},~$\sigma$ is an MCS ordering of~$G$. To complete the proof for the first statement, it suffices to show that~$\sigma$ is a linear extension of $R$. Let $(x, y)\in R$. If $\fstcpri{\pi}{x}=\fstcpri{\pi}{y}$, i.e.,~$x$ and~$y$ are contained in some $K(\pi, i)$ where $i\in [t]$, as $\rsto{\sigma}{K(\pi, i)}$ extends~$\rsto{R}{K(\pi, i)}$, it holds that $x <_{\sigma} y$. Otherwise, $\fstcpri{\pi}{x} <_{\pi} \fstcpri{\pi}{y}$ holds. Then, as~$\sigma$ is a generation of~$\pi$, $x <_{\sigma} y$ holds too. This completes the proof for that~$\sigma$ extends~$R$. Note that as~$\sigma$ is a generation of~$\pi$, all vertices in~$K_1$ are before all the other vertices of~$G$ in the ordering~$\sigma$.
	
	Concerning the second statement, observe that the above proof for the backward direction is constructive, and the polynomial-time solvability follows from the obvious fact that computing a linear extension of a partial order can be done in polynomial time.
\end{proof}

By Property~\ref{ppt-unit-contain-vertex}, 
each partial order~$R$ {\oover}~$V(G)$ specifies a partial order {\oover} the units: 
 \[\mathcal{Q}^R=\{(U^x, U^y) \mid  (x, y) \in R\}.\]

\begin{lemma} 
\label{lem-generic-search-on-layers-a} 
Let~$R$ be a partial order {\oover}~$V(G)$. Then, if there is an MCS ordering of~$G$ that starts with the vertices from~$K^{\star}$ and extends~${R}$, there is a generic search ordering of the layer structure rooted by~$K^{\star}$ that starts with~$U^{K^{\star}}$ and extends~$\mathcal{Q}^R$. 
\end{lemma}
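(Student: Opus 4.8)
The plan is to start from an MCS ordering $\sigma$ of $G$ that begins with the vertices of $K^{\star}$ and extends $R$, and to extract from it a traversal order of the units. By Lemma~\ref{lem-mcs-prim-ordering}, $\sigma$ is a generation of some Prim ordering $\pi = (K_1, K_2, \dots, K_t)$ of $C(G)$; since $\sigma$ starts with the vertices of $K^{\star}$, we may take $K_1 = K^{\star}$ (any maximal $K^{\star}$-clique is $K^{\star}$ itself). Now I would define an ordering $\rho$ of $\mathcal{U}$ by listing each unit $U$ at the position of the \emph{first} node of $U$ that appears in $\pi$; formally, order the units by the value $\min\{\, i : K_i \in \mathcal{K}(U)\,\}$, breaking no ties since these minima are distinct. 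Because $K_1 = K^{\star} \in \mathcal{K}(U^{K^{\star}})$, the ordering $\rho$ starts with $U^{K^{\star}}$.

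The first thing to verify is that $\rho$ is a generic search ordering of the layer structure. For this I would use Lemma~\ref{lem-prefix-prim-ordering}: for each $i$, $\{K_1, \dots, K_i\}$ induces the clique graph of $G[\bigcup_{j\le i}K_j]$, and in particular the prefix of $\pi$ induces a connected subgraph of $C(G)$. When a unit $U \ne U^{K^{\star}}$ first appears in $\pi$, say via node $K_i$, connectivity of the prefix $\{K_1,\dots,K_i\}$ forces an edge in $C(G)$ from $K_i$ to some earlier node $K_j$ ($j<i$); that earlier node lies in some unit $U'$ already opened in $\rho$, and since $K_j, K_i$ are in different units the edge $\edge{K_j}{K_i}$ has the minimum weight (Lemma~\ref{ppt-unit-larger-weight}), so $\edge{U'}{U}$ is an edge of the layer structure. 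Hence when $U$ is appended to $\rho$ it already has a neighbour among the earlier units, which is exactly the generic-search condition.

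It remains to check that $\rho$ extends $\mathcal{Q}^R$. Take $(U^x, U^y) \in \mathcal{Q}^R$ coming from $(x,y) \in R$; I must show $U^x <_\rho U^y$ or $U^x = U^y$, and in the display defining a generic search ordering only the strict case is needed, so assume $U^x \ne U^y$. Since $\sigma$ is a generation of $\pi$ and extends $R$, we have $\fstcpri{\pi}{x} <_\pi \fstcpri{\pi}{y}$ or $\fstcpri{\pi}{x} = \fstcpri{\pi}{y}$; in either case the first $x$-clique in $\pi$ appears no later than the first $y$-clique. Now $\fstcpri{\pi}{x}$ lies in some unit containing an $x$-clique, hence in $\cdst{U^x}$ by Property~\ref{pro-ancestor}, and similarly $\fstcpri{\pi}{y} \in \cdst{U^y}$. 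The main obstacle is precisely here: appearing first in $\pi$ inside $\cdst{U^x}$ does not immediately say $U^x$ itself is opened before $U^y$. To close this gap I would argue that the first node of $\pi$ lying in $\cdst{U^x}$ must in fact lie in $U^x$: by Property~\ref{ppt-layer-parent} and Property~\ref{ppt-bag}, $U^x$ separates $\dst{U^x}$ from the rest of the layer structure in the layer structure, which by Lemma~\ref{ppt-unit-larger-weight} transfers to $C(G)$ — every path in $C(G)$ from a node of $\dst{U^x}$ to a node outside $\cdst{U^x}$ must pass through a node of $U^x$; combined with connectivity of prefixes of $\pi$ (Lemma~\ref{lem-prefix-prim-ordering}), the first node of $\pi$ that enters $\cdst{U^x}$ enters it through $U^x$. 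Thus the opening position of $U^x$ in $\rho$ equals the position in $\pi$ of the first node in $\cdst{U^x}$, which is $\le$ position of $\fstcpri{\pi}{x} \le$ position of $\fstcpri{\pi}{y} =$ position of first node of $\pi$ in $\cdst{U^y} =$ opening position of $U^y$; since $U^x \ne U^y$ these positions are distinct, giving $U^x <_\rho U^y$, as required.
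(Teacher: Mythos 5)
Your construction of $\rho$ and your verification that it is a generic search ordering starting at $U^{K^{\star}}$ are sound, and your separator observation (the first node of~$\pi$ that lands in $\cdst{U^x}$ must lie in $U^x$ itself) is correct. But the final chain of inequalities contains a genuine gap: you assert that the position of $\fstcpri{\pi}{y}$ \emph{equals} the position of the first node of~$\pi$ lying in $\cdst{U^y}$, i.e.\ the opening position of~$U^y$. That first node lies in $U^y$, but it need not be a $y$-clique, so in general its position is only a \emph{lower} bound for the position of $\fstcpri{\pi}{y}$. With the correct inequality your chain reads
$\mathrm{open}(U^x)\le \mathrm{pos}(\fstcpri{\pi}{x})\le \mathrm{pos}(\fstcpri{\pi}{y})$ and $\mathrm{open}(U^y)\le \mathrm{pos}(\fstcpri{\pi}{y})$,
which bounds both openings from above by the same quantity and proves nothing about their relative order. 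Concretely, nothing in your argument rules out a Prim ordering that visits a non-$y$-clique of $U^y$, then all of $U^x$ including $\fstcpri{\pi}{x}$, and only later $\fstcpri{\pi}{y}$; in that scenario $\rho$ would place $U^y$ before $U^x$.

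The missing ingredient is the fact the paper leans on: in any Prim ordering of $C(G)$, the nodes of each unit appear \emph{consecutively}. This follows because, by Lemma~\ref{ppt-unit-larger-weight}, edges inside a unit have weight strictly greater than the minimum while all unit-crossing edges have exactly the minimum weight, so once the Prim search enters a (connected) unit it must exhaust it before taking any crossing edge. With consecutiveness, units are disjoint blocks of~$\pi$; one then shows $\fstcpri{\pi}{v}\in\mathcal{K}(U^v)$ (using the parent-before-child order of blocks together with Property~\ref{pro-ancestor}), and $\fstcpri{\pi}{x}<_\pi \fstcpri{\pi}{y}$ with $U^x\ne U^y$ forces the entire block of $U^x$ to precede that of $U^y$. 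This is exactly the paper's route, and it is what your argument needs to close the gap; your ordering $\rho$ then coincides with the block order, so the rest of your proof survives unchanged.
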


\begin{proof}
Let~$\sigma$ be an MCS ordering of~$G$ that starts with the vertices from~$K^{\star}$ and extends~${R}$. By Lemma~\ref{lem-mcs-prim-ordering}, there is a Prim ordering~$\pi$ of~$C(G)$ that generates~$\sigma$. Obviously,~$K^{\star}$ is the first node in~$\pi$. Then, by Algorithm~\ref{alg-Prim-Search}, Lemma~\ref{ppt-unit-larger-weight}, and Definition~\ref{def-layer-structure}, we know that for every unit~$U$, the nodes in~$U$ are consecutive in~$\pi$. Moreover, in view of Properties~\ref{ppt-layer-parent} and~\ref{ppt-bag}, for every nonroot unit~$U$, all nodes from~$\parent{U}$ are before all nodes from~$U$ in~$\pi$. Consequently, there is a linear order $\pi'=(U_1, U_2, \dots, U_t)$  of the units in the layer structure so that 
\begin{enumerate}
    \item[(1)] $U_1=U^{K^{\star}}$;
    \item[(2)] for every nonroot unit~$U$,~$\parent{U}$ is before~$U$ in~$\pi'$; and 
    \item[(3)] for every $i\in [t]$, all nodes of~$U_i$ are consecutive in the Prim ordering~$\pi$.  
\end{enumerate}
Conditions~(1) and~(2) mean that~$\pi'$ is a generic search ordering of the layer structure starting with the root. 
It remains to show that~$\pi'$ extends~$\mathcal{Q}^R$. For this purpose, 
let $(x, y)\in R$ such that $U^x\neq U^y$. As~$\sigma$ extends~$R$,~$x$ is before~$y$ in~$\sigma$. As~$\pi$ generates~$\sigma$, either $K_{\pi}^x=K_{\pi}^y$ holds or $K_{\pi}^x <_{\pi} K_{\pi}^y$ holds. Condition~(2) given above and Property~\ref{pro-ancestor} imply that~$K_{\pi}^v$ is contained in~$U^v$ for all~$v\in V(G)$. Then, as $U^x\neq U^y$, it must be that $K_{\pi}^x <_{\pi} K_{\pi}^y$. Finally, by Condition~(3) given above, we know that~$U^x$ is before~$U^y$ in~$\pi'$. This completes the proof that~$\pi'$ extends~$\mathcal{Q}^R$. 
\end{proof}

\section{A Dynamic Programming Algorithm for the {\prob{PSOP}}}

In this section, we present a polynomial-time dynamic programming algorithm for {\prob{PSOP-MCS}} restricted to chordal graphs. 

For an ordering~$\pi$ of units in a layer structure, and an ordering~$\sigma$ of nodes contained in a unit~$U$, {\emph{realizing}}~$\pi$ by~$\sigma$ is the operation of replacing~$U$ in~$\pi$ with~$\sigma$. For instance, for $\pi=(U_1, U_2, U_3, U_4)$ and $\sigma=(K_1, K_2, K_3)$ where~$\{K_1, K_2, K_3\}$ is the set of nodes in~$U_2$, realizing~$\pi$ by~$\sigma$ results in the linear order $(U_1, K_1, K_2, K_3, U_3, U_4)$.

\begin{theorem}
\label{thm-main}
{\prob{PSOP-MCS}} restricted to chordal graphs is polynomial-time solvable. 
\end{theorem}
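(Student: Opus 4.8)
The plan is to combine the structural machinery developed in Section~\ref{sec-ls} with dynamic programming over the layer structure. By Lemma~\ref{lem-mcs-prim-ordering-b}, solving \prob{PSOP-MCS} on a chordal graph~$G$ reduces to deciding whether $C(G)$ admits a Prim ordering respecting the given partial order~$R$; and by Lemma~\ref{lem-generic-search-on-layers-a}, any such Prim ordering (after fixing the starting clique~$K^{\star}$) projects to a generic search ordering of the layer structure rooted at~$K^{\star}$ that extends the induced partial order~$\mathcal{Q}^R$. So the first step is to show a converse: given a generic search ordering~$\pi'$ of the layer structure that starts with~$U^{K^{\star}}$ and extends~$\mathcal{Q}^R$, together with compatible local orderings of the nodes within each unit, one can realize~$\pi'$ into a genuine Prim ordering of $C(G)$ respecting~$R$ — provided the local orderings satisfy certain interface constraints dictated by the entrance clique (via Property~\ref{ppt-unit-entrance}) of each unit and by how the parent unit connects to it. This is where the notion of ``realizing'' $\pi$ by $\sigma$ defined just before the theorem comes in.

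Second, I would set up the dynamic program. Since Property~\ref{ppt-layer-parent} shows that contracting each bag yields a tree on the units rooted at~$U^{K^{\star}}$, the natural DP is bottom-up over this tree. For each unit~$U$, the subproblem asks: for which ``boundary conditions'' — essentially, which node of~$U$ serves as the entrance from $\parent{U}$, and which partial-order constraints among vertices in $\vv(U)\cap\vv(\parent{U})$ (the label set, by Property~\ref{ppt-label-between-units}) and vertices whose unit $U^v$ lies in $\dst{U}$ have already been ``committed'' — does the subtree $\cdst{U}$ admit a valid Prim-orderable realization consistent with those conditions? Because Property~\ref{pro-ancestor} tells us every maximal $v$-clique lives in $\cdst{U^v}$, and Property~\ref{ppt-unit-contain-vertex} pins down a canonical unit $U^v$, the partial order~$R$ decomposes cleanly across the tree: a pair $(x,y)\in R$ with $U^x=U^y$ is a purely local constraint inside one unit, while $U^x\neq U^y$ forces an ordering of units that must be respected by the generic search ordering. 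I would argue the number of distinct boundary conditions per unit is polynomial (the entrance clique is one of at most $|V(G)|$ choices, and the relevant separator sets are bounded), so the DP table has polynomial size, and each table entry is computed by combining children's entries — handling units in the same bag together, since a bag's units share a parent and all pairwise edge labels coincide (Property~\ref{ppt-bag}), so they can be linearly ordered among themselves subject only to $\mathcal{Q}^R$.

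Third, within a single unit $U$ one still has to solve a local version of the problem: order $\mathcal{K}(U)$ by a Prim search of the induced subgraph (using Lemma~\ref{ppt-unit-larger-weight}, which says $U$ is an induced subgraph of $C(G)$) starting at the prescribed entrance node, respecting the local part of $R$ and the commitments pushed up from descendant units attached to various nodes of $U$. I expect this local step to recurse structurally — one can again look at minimum-weight edges inside $U$ and peel off sub-units — so the algorithm is genuinely recursive: \prob{PSOP-MCS} on $G$ calls a layered DP which at the leaves calls smaller instances of the same flavor on induced subgraphs with fewer distinct edge weights, bottoming out when a unit is a single clique. Finiteness follows because each level of recursion strictly increases the minimum edge weight (or decreases the number of nodes), so the recursion depth is $O(|V(G)|)$.

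The main obstacle, I expect, is not the tree DP skeleton but the local/interface bookkeeping: proving that the interface between a unit and its parent is captured by a \emph{polynomial} amount of information. One must show that it suffices to remember, for each unit, only which node is the entrance clique plus a bounded summary of order constraints on the (small) separator vertices, and that two realizations agreeing on this summary are interchangeable for everything above — i.e., an exchange argument showing Prim-orderability and $R$-respect are preserved. Property~\ref{ppt-unit-entrance} is the key enabler here (the entrance clique of a child is adjacent in $C(G)$ to a clique of the parent), together with Lemmas~\ref{lem-prefix-prim-ordering} and~\ref{lem-mcs-prim-ordering} which let us freely splice Prim-orderable prefixes. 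Once that interchangeability lemma is in hand, correctness of the DP follows by induction over the unit tree, and the polynomial running time follows from the polynomial table size and the fact that the whole construction has recursion depth $O(|V(G)|)$. Finally, I would note that we only need to try each maximal clique of~$G$ as the root $K^{\star}$ — at most $|V(G)|$ choices by the folklore bound — and that pinning the root is legitimate because Lemma~\ref{lem-mcs-prim-ordering-b} produces an MCS ordering starting with the vertices of the first Prim node.
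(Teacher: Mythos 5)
Your plan assembles the right ingredients (Lemma~\ref{lem-mcs-prim-ordering-b}, the layer structure rooted at a candidate clique, a recursion that peels off the smallest edge weight, trying every maximal clique as the root), and your recursion on induced subgraphs with fewer distinct edge weights is essentially the paper's table $D(i,H,K)$ indexed by weight level, connected component, and starting node. But there is a genuine gap exactly where you flag ``the main obstacle'': your DP state records, besides the entrance clique of a unit, which partial-order constraints among separator vertices and descendant vertices have ``already been committed,'' and you defer to an unproven interchangeability lemma to argue that this summary is polynomial. That lemma is the whole difficulty, and the proposal does not supply it; as stated, the collection of order commitments involving $\vv(U)\cap\vv(\parent{U})$ and vertices $v$ with $U^v\in\dst{U}$ has no evident polynomial bound, so the DP is not yet an algorithm.

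The paper's resolution is that no such interface summary is needed at all. Because every Prim ordering of $H$ that starts in the root unit has its first $v$-clique inside $U^v$ (Properties~\ref{ppt-layer-parent}, \ref{ppt-bag}, and~\ref{pro-ancestor}), a constraint $(x,y)\in R$ with $U^x\neq U^y$ is equivalent to the requirement that $U^x$ precede $U^y$ in the induced ordering of units, independently of how each unit is internally ordered, while a constraint with $U^x=U^y$ is purely local to that unit. Hence the cross-unit part of $R$ collapses to the partial order $\mathcal{Q}^{R'}$ on units, and deciding whether it is compatible with some root-started generic search ordering of the layer structure is delegated to Scheffler's polynomial-time algorithm for the rooted {\prob{PSOP}} of generic search---an external subroutine your proposal never invokes but would need. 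The intra-unit part is exactly the subproblem at the next weight level, so the state reduces to $(i,H,K)$ with $O(n^2)$ entries. The only genuine interface condition is that each non-root unit admit a starting node adjacent in $C(G)$ to some node of its parent unit, and, for the converse direction, that the first node of a unit in an arbitrary Prim ordering can be assumed to have this property even when its Prim predecessor lies in a sibling unit of the same bag; this is precisely where Property~\ref{ppt-unit-entrance} is used. Without the collapse to $\mathcal{Q}^{R'}$ and the generic-search subroutine, the correctness and the polynomial bound of your scheme remain unestablished.
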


\begin{proof}
Let $I=(G, R)$ be an instance of {\prob{PSOP-MCS}}, where~$G$ is a connected chordal graph of~$n$ vertices, and~$R$ is a partial order {\oover}~$V(G)$. 
Our algorithm consists of the following steps. 
\begin{description}
    \item[Step~1] We sort the weights of edges in the clique graph~$C(G)$ of~$G$ in increasing order. Let $(w(1)$, $w(2)$, $\dots$, $w(t))$ be this order, where~$t$ is the number of different weights of edges in~$C(G)$. Hence  $w(1)<w(2)<\cdots <w(t)$ holds. Notice that $t=\bigo{n}$ since edges of~$C(G)$ may have at most~$n-2$ different weights.  
    
    \item[Step~2] For each $i\in [t]$, let $C^{\mneg{i}}(G)$ be the graph obtained from~$C(G)$ by removing edges whose weights are from $\{w(1), w(2), \dots, w(i)\}$. 
    Let $C^{\mneg{0}}(G)=C(G)$. Clearly, every $C^{\mneg{i}}(G)$, where $i\in [t]$, is obtained from~$C^{\mneg{i-1}}(G)$ by deleting all edges with the minimum weight.
    
    \item[Step~3] We maintain a binary dynamic programming table $D(i, H, K)$, where $i\in [t]\cup \{0\}$,~$H$ is a connected component of~$C^{\mneg{i}}(G)$, and~$K$ is a node in~$H$. (Notice that for $i=0$, we have that $H=C(G)$.) As $t=\bigo{n}$, every chordal graph of~$n$ vertices has at most~$n$ maximal cliques, and~$K$ is a node in~$H$, the table has~$\bigo{n^2}$ entries. 

    For each connected component~$H$ of some~$C^{\mneg{i}}(G)$, let~$\vv(H)$ be the subset of vertices of~$G$ contained in nodes of~$H$. 
    We define $D(i, H, K)$ to be~$1$ if there is a Prim ordering of~$H$ which starts with~$K$ and respects~$\rsto{R}{\vv(H)}$, and define $D(i, H, K)$ to be~$0$ otherwise. (We elaborate on how to compute the table later.)
    
    \item[Step~4] After the table is computed, if $D(0, C(G), K)=0$ for all maximal cliques~$K$ of~$G$, by the definition of the table there is no Prim ordering of~$C(G)$ respecting~$R$, and by Lemma~\ref{lem-mcs-prim-ordering-b}, the given instance~$I$ is a {\noins}. Otherwise, there exists a maximal clique~$K$ of~$G$ such that $D(0, C(G), K)=1$. By the definition of the table and by Lemma~\ref{lem-mcs-prim-ordering-b}, there is an MCS ordering of~$G$ extending~$R$. Therefore, in this case, we conclude that~$I$ is a {\yesins}.  
\end{description}

Computing the table defined above lies at the core of the algorithm, which is the main focus of the remainder of the proof. To this end, we first show that each graph~$H$ used in Step~3 is a clique graph of a connected chordal graph. 

\begin{claim}
\label{claim-a}
For every $i\in [t]\cup \{0\}$, every connected component~$H$ of $C^{\mneg{i}}(G)$ is the clique graph of the subgraph of~$G$ induced by~$\vv(H)$. 
\end{claim}

\smallskip
\noindent{\it{Proof of Claim~\ref{claim-a}}}
We prove the claim by induction on~$i$. The statement is clearly true for $i=0$.  Now let $i\in [t]$, and let~$H$ be a connected component of~$C^{\mneg{i}}(G)$. Let~$H'$ be the connected component of~$C^{\mneg{i-1}}(G)$ containing~$H$. By induction,~$H'$ is the clique graph of~$G[\vv(H')]$ which is a connected chordal graph. 
Obviously,~$H$ is a connected component of~$H'$ after removing all edges with the minimum weight in~$H'$. As a consequence, there is a Prim ordering of~$H'$ which starts with a node from~$H$ and consecutively visits the nodes in~$H$. Then, by Lemma~\ref{lem-prefix-prim-ordering},~$H$ is the clique graph of the subgraph of~$G[\vv(H')]$ induced by~$\vv(H)$ which is exactly~$G[\vv(H)]$. This completes the proof for the claim.
\smallskip

By~Claim~\ref{claim-a}, each~$H$ in Step~3 is the clique graph of a connected chordal graph. Therefore, all properties and lemmas studied in the previous section apply to~$H$ and each of its layer structures. 

Now we are ready to show how to compute the table defined in Step~3. We fill the entries $D(i, H, K)$ of the table in a decreasing order of the values of~$i$. Entries where the first component is~$t$ are base entries, which are computed directly by the definition of the table. 

\begin{itemize}
    \item filling the base entries

Obviously, $C^{\mneg{t}}(G)$ consists of pairwise nonadjacent nodes corresponding to maximal cliques of~$G$. 
By the definition of the table, we directly set $D(t, H, K)=1$ for all base entries. 

     \item updating the table
     
Now we delineate how to update other entries $D(i, H, K)$, assuming all entries $D(i+1, H', K')$ have been computed.  
To compute the entry~$D(i, H, K)$, we first compute the layer structure of~$H$ rooted by~$K$. Let~$\mathcal{LS}$ be the layer structure. By Definition~\ref{def-layer-structure},~$\mathcal{LS}$ can be computed in polynomial time. Note that each unit in~$\mathcal{LS}$ is a connected component of~$C^{\mneg{i+1}}(G)$. For each vertex~$x$ in~$\vv(H)$, let~$U^x$ be the unit in~$\mathcal{LS}$ which contains~$x$ and is at the least distance from the root of~$\mathcal{LS}$. Recall that by Property~\ref{ppt-unit-contain-vertex}, such a unit is unique. Besides, for each node~$K'$ of~$H$, let~$U^{K'}$ be the unit in~$\mathcal{LS}$ containing~$K'$. Let $R'=\rsto{R}{\vv(H)}$ be~$R$ restricted to~$\vv(H)$. 
Let $\mathcal{Q}^{R'} = \{(U^x, U^y) \mid  (x,y) \in R'\}$.  
Now we determine if there is a generic search ordering of~$\mathcal{LS}$ which starts with the root and extends~$\mathcal{Q}^{R'}$. This can be done in polynomial time~\cite[Theorem 6]{DBLP:conf/wg/Scheffler22}\footnote{Scheffler~\cite{DBLP:conf/wg/Scheffler22} showed that the rooted version of the {\prob{PSOP}} for the generic search can be solved in polynomial time. In this version, we are given a graph~$G$, a partial order {\oover}~$V(G)$, and a vertex~$v\in V(G)$, and the question is whether~$G$ admits a generic search ordering which starts with~$v$ and extends the given partial order.}. If this is not the case, by Claim~\ref{claim-a} and Lemma~\ref{lem-generic-search-on-layers-a}, there is no MCS ordering of~$G[\vv(H)]$ that starts with the vertices from~$K$ and extends~$R'$, and by Claim~\ref{claim-a} and Lemma~\ref{lem-mcs-prim-ordering-b} there is no Prim ordering of~$H$ which starts with~$K$ and respects~$R'$. So in this case, we set $D(i, H, K)=0$. 
Otherwise, let $\pi=(U_0, U_1, \dots, U_{\pp})$ be a generic search ordering of~$\mathcal{LS}$ extending~$\mathcal{Q}^{R'}$ so that $U_0=U^K$.   	
Then, we let $D(i, H, K)=1$ if and only if 
\begin{enumerate}
\item[(1)] $D(i+1, U^K, K)=1$, and 
\item[(2)] for every~$U_j$, $j\in [\pp]$, there exists a node~$K_j$ contained in~$U_j$ such that in~$H$ the node~$K_j$ is adjacent to at least one node from the parent of~$U_j$ in~$\mathcal{LS}$ and, moreover, $D(i+1, U_j, K_j)=1$. 
\end{enumerate}
We show the correctness of this step as follows. 
Observe that in every generic search ordering of~$\mathcal{LS}$ starting from the root~$U^K$, every unit is visited before all its children (if there are any). 

We first prove the ``if'' direction. Assuming Conditions~(1) and~(2), let~$\pi'$ be the ordering obtained from~$\pi$ by 
\begin{enumerate}
    \item[(a)] realizing the first unit~$U^K$ by a Prim ordering of~$U^K$ which starts with~$K$ and respects~$R'$ restricted to~$U^K$ (guaranteed by $D(i+1, U^K, K)=1$), and 
    \item[(b)] realizing every~$U_j$, where $j\in [\pp]$, by a Prim ordering of~$U_j$ which starts with~$K_j$ and respects~$R'$ restricted to~$U_j$ (guaranteed by $D(i+1, U_j, K_j)=1$). 
\end{enumerate} 

The remainder of the proof for the ``if'' direction comprises the following two claims. 

\begin{claim}
\label{claim-b}
    $\pi'$ is a Prim ordering of~$H$ with the first node being~$K$.
\end{claim}

{\it{Proof of Claim~\ref{claim-b}}} 
By Condition~(1) and Operation~(a), we know that the first node in~$\pi'$ is~$K$. 
Besides, from Condition~(2) and Operation~(b), for every~$U_j$ where $j\in [\pp]$ the first node of~$\pi'$ restricted to~$U_j$ is~$K_j$.  By the definition of Prim ordering (Algorithm~\ref{alg-Prim-Search}) and the definition of~$\pi'$, it suffices now to show that for every~$U_j$ where $j\in [\pp]$ , the following condition holds:~$K_j$ is adjacent to at least one node in~$H$ which is before~$K_j$ in~$\pi'$ and is from a different unit adjacent to~$U_j$. This is the case as by Condition~(2),~$K_j$ is adjacent to at least one node from the parent of~$U_j$ in~$\mathcal{LS}$, and as~$\pi$ is a generic search ordering of~$\mathcal{LS}$ with the root being the first unit, by Properties~\ref{ppt-layer-parent} and~\ref{ppt-bag}, the parent of~$U_j$ is before~$U_j$ in~$\pi$, implying that all nodes in the parent of~$  U_j$ are before all nodes of~$U_j$ in~$\pi'$.

\begin{claim}
\label{cliam-c}
    $\pi'$ respects~$R'$.
\end{claim}

{\it{Proof of Claim~\ref{cliam-c}}} 
To verify that~$\pi'$ respects~$R'$, let $(x,y)\in R'$. Due to Properties~\ref{ppt-layer-parent}, \ref{ppt-bag}, and~\ref{pro-ancestor}, and that~$\pi$ is a generic search ordering of the layer structure~$\mathcal{LS}$ beginning with the root, we know that for every~$v\in V(G)$ the first node in~$\pi'$ containing~$v$ is from~$U^v$, i.e., $K^v_{\pi'}\in \mathcal{K}(U^v)$. Our proof proceeds by distinguishing between the following two cases. If~$U^x=U^y=U$, then as~$\pi$ has been realized by a Prim ordering of~$U$ respecting~$R'$ restricted to~$U$ in~$\pi'$, it holds that $K^x_{\pi'} <_{\pi'} K^y_{\pi'}$ or $K^x_{\pi'} = K^y_{\pi'}$. Otherwise, as~$\pi$ extends~$\mathcal{Q}^{R'}$, we know that~$U^x$ is before~$U^y$ in~$\pi$. By the definition of~$\pi'$, maximal~$x$-cliques in~$U^x$ are before maximal $y$-cliques in~$U^y$. By Properties~\ref{ppt-layer-parent},~\ref{ppt-bag}, and~\ref{pro-ancestor}, none of any units containing a maximal $y$-clique is before~$U^y$ in~$\pi$.  Then, from $K^x_{\pi'}\in \mathcal{K}(U^x)$ and $K^y_{\pi'}\in \mathcal{K}(U^y)$, it follows that $K^x_{\pi'} <_{\pi'} K^y_{\pi'}$. 
\end{itemize}

Now we give the proof for the ``only if'' direction. To this end, assume that $D(i, H, K)=1$, i.e.,~$H$ admits at least one Prim ordering, say~$\pi'$, which starts with~$K$ and respects~$R'$. As~$\pi'$ respects~$R'$, for each unit~$U$ in the layer structure~$\mathcal{LS}$,~$\pi'$ restricted to~$U$, i.e.,~$\pi'|_{U}$, is a Prim ordering of~$U$ respecting~$R'$ restricted to~$U$. Consequently, $D(i+1, U, K')=1$ where~$K'$ is the first node in~$\pi'|_{U}$. This immediately implies that Condition~(1) holds. We show below that Condition~(2) also holds. Let~$U_j$, $j\in [\pp]$, be a unit in~$\mathcal{LS}$. Let~$K'$ be the first node in~$\pi'|_{U_j}$. We claim that~$K'$ is adjacent in~$H$ to some node from the parent of~$U_j$ in~$\mathcal{LS}$. First, as~$\pi'$ is a Prim ordering of~$H$ and~$K'$ is not the first node in~$\pi'$,~$K'$ is adjacent to at least one node, say~$\hat{K}$, before~$K'$ in~$\pi'$ and, moreover, as~$K'$ is the first node in~$\pi'|_{U_j}$,~$\hat{K}$ is from a different unit, say~$\hat{U}$. If~$\hat{U}$ is the parent of~$U_j$ in~$\mathcal{LS}$, we are done.  
Otherwise, by Lemma~\ref{ppt-unit-larger-weight} and Definition~\ref{def-layer-structure}, we know that nodes in each unit are consecutive in~$\pi'$. 
As~$K$ is the first node of~$\pi'$ and~$K$ is contained in the root of~$\mathcal{LS}$, by Properties~\ref{ppt-layer-parent} and~\ref{ppt-bag}, none of the nodes contained in any descendant of~$U_j$ is visited before~$K'$ in~$\pi'$. It follows that~$\hat{U}$ is in the same bag as~$U_j$. Then, by Property~\ref{ppt-unit-entrance}, there exists a node from the parent of~$U_j$ which is adjacent to~$K'$ in~$H$. 

The algorithm runs in polynomial time since the table has at most~$\bigo{n^2}$ entries, and computing the value of each entry can be done in polynomial time as described  above.
\end{proof}

\section{Conclusion}
We have derived a polynomial-time algorithm for the {POSP} of MCS restricted to chordal graphs, resolving one open question left in the work of Scheffler~\cite{DBLP:conf/wg/Scheffler22}. To achieve this result, we propose the notion of layer structure which might be of independent interest.

We remark that despite that our algorithm is for the {\prob{POSP}} which is defined as a decision problem, the algorithm can be utilized to solve its optimization version: if an instance $(G, R)$ is determined as a {\yesins} by our algorithm, there exists $D(0, C(G), K)=1$ for some maximal clique~$K$ of~$G$. Then, we can use standard backtracking techniques to obtain a Prim ordering  of~$C(G)$ that starts with~$K$ and respects~$R$, and use Lemma~\ref{lem-mcs-prim-ordering-b} to obtain an MCS ordering of~$G$ extending~$R$ in polynomial time.

Additionally, graph search paradigms are usually studied for connected graphs but they can be trivially adapted for disconnected graphs by running the search algorithms on each connected component one after another. In this case, our result also holds if the input graph is disconnected. 
In particular, if~$G$ is disconnected in an instance $I=(G, R)$, we run the  algorithm presented in the proof of Theorem~\ref{thm-main} for each $(G', R')$ where~$G'$ is a connected component of~$G$ and~$R'=\rsto{R}{V(G')}$. If at least one of these instances is a {\noins},~$I$ is a {\noins}. Otherwise,~$I$ is a {\yesins} if and only if there  exist no distinct $x, y, z\in V(G)$ such that~$x$ and~$y$ are in the same connected component of~$G$,~$x$ and~$z$ are in different connected components of~$G$, and $\{(x, z), (z, y)\}\subseteq R$.

An important topic for future research is to improve the running time of our algorithm. Regarding this issue, by elementary analysis, one can verify easily that our algorithm runs in~$\bigo{n^4}$ time, where~$n$ is the the number of vertices in the input graph. On top of that, investigating Scheffler's~\cite{DBLP:journals/tcs/Scheffler22} primary question---whether there are important graph classes restricted to which POSP is {\nph} but at least one of the end-vertex problem and the {$\mathcal{F}$}-tree recognition problem is polynomial-time solvable---is another promising avenue for future research.

\section*{Acknowledgement}
This paper was supported by the National Natural Science Foundation of China under grant 62302060 and 62372066, Research Foundation of Education Bureau of Hunan Province under grant 21B0305, and Natural Science Foundation of Hunan Province of China under grant 2022JJ30620. 

The authors thank the anonymous reviewers of MFCS 2023 for their careful reading and instructive comments.

\end{document}